\journal{arXiv.org}
\begin{document}

%All macros here

\newtheorem{theorem}{Theorem}
\newtheorem{definition}{Definition}
\newtheorem{obs}[definition]{Observation}
\newtheorem{lemma}[theorem]{Lemma}
\newtheorem{corollary}[theorem]{Corollary}
\newtheorem*{claim}{Claim}

\newcommand{\F}{\mathbb{F}}
\newcommand{\nat}{\mathbb{N}}
\newcommand{\real}{\mathbb{R}}
\newcommand{\field}{\mathbb{F}}
\newcommand{\otrootn}{$\tilde{O}(\sqrt{n})$}
\newcommand{\orootn}{$O(\sqrt{n})$}
\newcommand{\rootn}{$\sqrt{n}$}
\newcommand{\Stack}{\texttt{Stack}}
\newcommand{\NonTerm}{\texttt{NonTerm}}
\newcommand{\CompStack}{\texttt{CompStack}}
\newcommand{\select}{\texttt{SELECT}}
\newcommand{\Compress}{\texttt{FP}}
\newcommand{\FP}{\texttt{FP}}
\newcommand{\EQUALITY}{\textsf{\small{EQUALITY}}}
\newcommand{\DISJOINT}{\textsf{\small{DISJOINT}}}

\newcommand{\stack}{\texttt{stack}}
\newcommand{\nonterm}{\texttt{non\_term}}
\newcommand{\compstack}{\texttt{comp\_stack}}
\newcommand{\nextnonterm}{\texttt{next\_non\_term}}
\newcommand{\nextterm}{\texttt{next\_term}}
\newcommand{\comppart}{\texttt{comp\_part}}

\newcommand{\DEGSEQ}{\mbox{\sf Deg-Seq}}
\newcommand{\EQ}{\mbox{\sc Equal}}
\newcommand{\DISJ}{\mbox{\sc Disj}}
\newcommand{\CFL}{\mbox{\sf \small{CFL}}}
\newcommand{\CPDA}{\mbox{\sf \small{CPDA}}}
\newcommand{\PDA}{\mbox{\sf \small{PDA}}}
\newcommand{\CFG}{\mbox{\sf \small{CFG}}}
\newcommand{\DCFL}{\mbox{\sf \small{DCFL}}}
\newcommand{\DLIN}{\mbox{\sf \small{DLIN}}}
\newcommand{\LLone}{\mbox {\sf \small{LL}$(1)$}}
\newcommand{\LL}{\mbox{\sf \small{LL}}}
\newcommand{\LLk}{\mbox{\sf \small{LL}}$(k)$}
\newcommand{\firstk}{\mbox {\sc \small{First$_k$}}}
\newcommand{\VPL}{\mbox{\sf \small{VPL}}}
\newcommand{\restVPL}{\mbox{rest}{\sf-\small{VPL}}}
\newcommand{\VPA}{\mbox{\sf \small{VPL}}}
\newcommand{\wVPL}{\mbox{\sf wVPL}}
\newcommand{\VPG}{\mbox{\sf VPG}}
\newcommand{\wVPG}{\mbox{\sf wVPG}}
\newcommand{\DLCFG}{\mbox{\sf \small{DL-CFG}}}
\newcommand{\Dyckt}{\mbox{\sf Dyck}$_2$}
\newcommand{\Dyckk}{\mbox{\sf Dyck}$_k$}
\newcommand{\otDyckt}{\mbox{$1$-turn-{\sf Dyck}$_2$}}
\newcommand{\otDyckk}{$1$-turn-\mbox{\sf Dyck}$_k$}
\newcommand{\rank}{rank}
\newcommand{\TCz}{\mbox{\sf TC}$^0$}
\newcommand{\NCo}{\mbox{\sf NC}$^1$}

\begin{frontmatter}

%% Title, authors and addresses

%% use the tnoteref command within \title for footnotes;
%% use the tnotetext command for the associated footnote;
%% use the fnref command within \author or \address for footnotes;
%% use the fntext command for the associated footnote;
%% use the corref command within \author for corresponding author footnotes;
%% use the cortext command for the associated footnote;
%% use the ead command for the email address,
%% and the form \ead[url] for the home page:
%%
%% \title{Title\tnoteref{label1}}
%% \tnotetext[label1]{}
%% \author{Name\corref{cor1}\fnref{label2}}
%% \ead{email address}
%% \ead[url]{home page}
%% \fntext[label2]{}
%% \cortext[cor1]{}
%% \address{Address\fnref{label3}}
%% \fntext[label3]{}

\title{Streaming algorithms for language recognition problems}

%% use optional labels to link authors explicitly to addresses:
%% \author[label1,label2]{<author name>}
%% \address[label1]{<address>}
%% \address[label2]{<address>}

\author[yahoo]{Ajesh Babu}
\ead{ajesh@yahoo-inc.com}
\address[yahoo]{Yahoo! Labs}

\author[iitb]{Nutan Limaye}
\ead{nutan@cse.iitb.ac.in}
\address[iitb]{Indian Institute of Technology, Mumbai, India}

\author[tifr]{Jaikumar Radhakrishnan}
\ead{jaikumar@tifr.res.in}

\author[tifr]{Girish Varma\corref{cor1}}
\ead{girish@tcs.tifr.res.in}
\address[tifr]{Tata Institute of Fundamental Research, Mumbai, India}

\begin{abstract} 
We study the complexity of the following problems in the streaming
model.

\paragraph{Membership testing for \DLIN } 
We show that every language in \DLIN\ can be recognised by a
randomized one-pass $O(\log n)$ space algorithm with inverse
polynomial one-sided error,  and by a deterministic $p$-pass
$O(n/p)$ space algorithm.  We show that these
algorithms are optimal.

\paragraph{Membership testing for \LL$(k)$} 
For languages generated by \LL$(k)$ grammars with a bound of $r$ on
the number of nonterminals at any stage in the left-most derivation,
we show that membership can be tested by a randomized one-pass
$O(r\log n)$ space algorithm with inverse polynomial (in $n$)
one-sided error.

\paragraph{Membership testing for \DCFL } 
We show that randomized algorithms as efficient as the ones described
above for \DLIN\ and $\LL(k)$ (which are subclasses of \DCFL) cannot
exist for all of \DCFL : there is a language in \VPL\ (a subclass of
\DCFL ) for which any randomized $p$-pass algorithm with error bounded
by $\epsilon < 1/2$ must use $\Omega(n/p)$ space.

\paragraph{Degree sequence problem} 
We study the problem of determining, given a sequence $d_1, d_2,
\ldots, d_n$ and a graph $G$, whether the degree sequence of $G$ is
precisely $d_1, d_2, \ldots, d_n$.  We give a randomized one-pass
$O(\log n)$ space algorithm with inverse polynomial one-sided error
probability. We show that our algorithms are optimal.
\\ \\
Our randomized algorithms are based on the recent work of Magniez et
al. ~\cite{MMN09}; our lower bounds are obtained by considering
related communication complexity problems.
\end{abstract}

\begin{keyword}
streaming algorithms \sep randomized algorithms \sep communication complexity 
\sep context free language
\end{keyword}

\end{frontmatter}

%%
%% Start line numbering here if you want
%%
% \linenumbers

%% main text
% Section 1 : Introduction 
\section{Introduction}
\label{sec:intro}
Modeling computational problems as language recognition is\linebreak
well-established in theoretical computer science.  By studying the
complexity of recognising languages, one seeks to understand the power
and limitations of various computational models, and also classify
problems according to their hardness.  In this paper, we study
language recognition problems in the data stream model.

The data stream model was invented to understand issues that arise in
computations involving large amounts of data, when the processors have
limited memory and are allowed limited access to the input (typically,
restricted to a small number of passes over it). Such a situation
arises when the input is in secondary storage and it is infeasible to
load it all in the main memory. In recent years, this model has gained
popularity for modeling the actions of routers and other agents on
the internet that need to keep aggregate information about the packets
that they handle; the number of packets is large, and the routers
themselves are allowed only a small amount of memory. In this case,
the final decision needs to be based on just one pass over the input.

In the data stream model, the two main parameters of interest are the
memory available for processing and the number of passes allowed.  An
algorithm is considered efficient if the space it uses is
significantly smaller than the input length (ideally, only
polylogarithmic), and the number passes on the input is small
(ideally, just one).  Given these constraints, most interesting
problems become intractable in this model if the algorithm is required
to be {\em deterministic}.  Randomness, however, is remarkably
effective, and many interesting randomized algorithms have been
proposed (starting with Alon et al. \cite{Alon96} and see the survey
by Muthukrishnan~\cite{muthu03}).

When the number of passes over the input is not restricted, or when
random access to the input is available, the data stream model
corresponds closely to the model of space bounded Turing machines.
Often, techniques developed for such unrestricted space bounded
computations, carry over to the data stream model with limited access
to inputs (e.g.  Nisan's pseudorandom generator~\cite{Nisan90}
designed for derandomizing space bounded randomized computations, has
been effectively employed in many data stream algorithms, starting with
Indyk~\cite{Indyk06}).  In this paper, we consider streaming
algorithms for several language recognition problems that can be
solved in $polylog(n)$ space on a Turing machine.

%{\bf
%The streaming model\\
%  - What is the streaming model?\\
%  - Its relation to small space computations\\
%  - Motivation for considering streaming algorithms\\
%  - Cite some surveys\\
%  - 
%}

We will assume that the reader is familiar with basic formal language
theory, in particular, the class of {\em context free languages}
(\CFL).  Our results concern some subclasses of {\CFL}s, namely
$\DLIN$, $\LL(k)$ and $\DCFL$ (we recall their definitions in
Sections~\ref{sec:dlin}, \ref{sec:llk} and \ref{sec:lowerbounds}).
Slightly differing definitions for $\DLIN$ were first given by Ibarra
et al.~\cite{IJR88} and Nasu et al.~\cite{NH69}; the definition we use is
due to Higuera et al.~\cite{Hig02}, where the several similar
definitions are compared and a more general class is defined. It was shown by Holzer et al.~\cite{HL93}
that membership in these languages can be tested in space $O(\log n)$.
{\LLk} languages were defined by Lewis et al. and Knuth \cite{LS68,Knuth71}, and they play
an important role in parsing theory.  Informally, they are the
languages for which the left-most derivation can be obtained
deterministically by making a single pass on the input from left to
right with $k$-lookaheads.  Apart from some technicalities arising
from $\epsilon$-rules in the grammar, the class {$\LL(k)$} includes
$\DLIN$.  It was shown by \cite{Coo79} that all deterministic context-free
languages can
be recognised in space $O(\log^2 n)$.  In this paper, we examine if
languages in $\DLIN$ and $\LL(k)$ admit similar efficient membership
testing in the streaming model.

Our work is motivated by a recent membership testing algorithm of
Magniez et al. \cite{MMN09} for the language \Dyckt, which is the
language of balanced parentheses on two types of parentheses.  The
algorithm uses $O(\sqrt{n} \log n)$ space.  We apply their fingerprinting 
based method to the subclass \DLIN\ and also give a deterministic $p$-pass, $O(n/p)$ space algorithm.
%%%
%% UPPER AND LOWER BOUND FOR DLIN
%%%
\begin{theorem}
\label{thm:dlin-upperbound}
For every $L \in \DLIN$, 
\begin{enumerate}
 \item there is a randomized one-pass $O(\log n)$
space streaming algorithm such that for $x \in \{0,1\}^n$
 \begin{enumerate}
  \item if $x \in L$ then the algorithm accepts with probability $1$;
  \item if $x \notin L$ then the algorithm rejects with probability at least
 $1-\frac{1}{n}$. 
 \end{enumerate} 
 \item there is a deterministic one-pass $O(n/p)$ space streaming algorithm for testing membership in $L$.
\end{enumerate}
\end{theorem}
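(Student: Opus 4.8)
The plan is to exploit the restricted ``linear'' shape of left-most derivations in a \DLIN\ grammar, reduce membership to a single string-matching check, and then compress the parser's stack exactly as in the fingerprinting method of Magniez et al.~\cite{MMN09}. In a linear grammar every production has the form $A \to uBv$ or $A \to u$ with $u,v$ terminal strings and $B$ a single nonterminal, so a derivation of $w$ has the outside-in shape
\[
S \Rightarrow u_1 A_1 v_1 \Rightarrow u_1 u_2 A_2 v_2 v_1 \Rightarrow \cdots \Rightarrow u_1 \cdots u_k\, z\, v_k \cdots v_1 = w .
\]
Reading $w$ left to right, a deterministic parser first consumes the prefix $u_1\cdots u_k z$ and must then match the string $v_k v_{k-1}\cdots v_1$ against the remaining suffix; since the blocks are produced in the order $v_1,\dots,v_k$ but matched in the reverse order, they behave precisely like a stack. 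I would first put the grammar into a normal form in which the production applied at each step (and the moment the terminal production $z$ is reached) is determined by the current nonterminal together with the next input symbol, so that the control of the parse is independent of the stack contents and the stack merely records the blocks $v_i$ awaiting comparison. With this in hand, $x\in L$ is equivalent to a single equality: the symbol sequence predicted on the stack equals the corresponding suffix of $x$.

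For part~(i) I would represent the $\Theta(n)$-symbol stack by a Karp--Rabin fingerprint over a field $\field_p$ with $p=\mathrm{poly}(n)$ and a random evaluation point $\alpha$, which costs $O(\log n)$ bits. Pushing a block $v_i$ updates the fingerprint of the predicted suffix (prepending to the current string, with the shift $\alpha^{\ell}$ maintained from a length counter), and during the matching phase each input symbol is folded into a second fingerprint of the actual suffix; the algorithm accepts iff the two fingerprints agree and all input is consumed. This gives one-sided error: if $x\in L$ the predicted and actual suffixes are literally equal, so the fingerprints always agree and the algorithm accepts with probability $1$; if $x\notin L$ the two strings differ, their fingerprints are the evaluations at $\alpha$ of two distinct polynomials of degree at most $n$, and these disagree for all but at most $n$ choices of $\alpha$, so a mismatch is detected with probability at least $1-\tfrac{1}{n}$. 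The working space is two fingerprints, a length counter and the finite control, i.e.\ $O(\log n)$.

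For part~(ii) I would avoid randomness by verifying the stack in blocks across passes. Partition the at most $n$ predicted-suffix symbols into $p$ contiguous blocks of size $O(n/p)$. In the $j$th pass I would run the same deterministic parse---whose control needs only $O(\log n)$ space by the normal form above---but store explicitly only the $j$th block of stack symbols and check exactly the matches that fall in that block, ignoring the rest. After $p$ passes every block has been verified, so $x$ is accepted iff no mismatch was ever found. Since each pass stores one block plus $O(\log n)$ control, the space is $O(n/p)$.

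The main obstacle I expect is the structural step, not the fingerprinting: establishing the normal form in which the deterministic linear parse has stack-content-independent control, correctly handling $\epsilon$-productions and the transition from the push phase to the matching phase, so that membership really does collapse to one equality check between a stack-predicted string and an input substring. It is precisely the single matched ``run'' guaranteed by linearity---as opposed to the nested, two-type matching of \Dyckt---that lets the compressed stack fit in $O(\log n)$ rather than $O(\sqrt{n}\log n)$ space; verifying that this run is faithfully tracked by the prepend-and-fold fingerprint updates (and hence that the error is genuinely one-sided) is the remaining technical content.
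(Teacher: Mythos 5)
Your overall strategy is the same as the paper's: exploit the one-turn shape of a \DLIN\ parse so that the stack holds a single run of terminals to be matched against the remaining suffix, fingerprint that run \`{a} la Magniez et al., and get one-sided error from Schwartz--Zippel; your two-fingerprint bookkeeping is just the paper's single running difference split in two, and your blocked multi-pass verification for part~(ii) is equivalent to the paper's reduction to \otDyckt\ followed by pairing block $B_j$ with $B_{2p-(j+1)}$ in pass $j$.

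The one genuine gap is exactly the step you flag as an ``obstacle'' but then propose to dispatch with a normal form that does not exist. You claim the grammar can be put in a form where the production applied at each step --- \emph{including the moment the parse switches from the push phase to the matching phase} --- is determined by the current nonterminal together with the next input symbol. For \DLIN\ this is false: the determinism condition in the definition only separates two productions $A\rightarrow a\omega$ and $A\rightarrow b\omega'$ with $a\neq b$; it says nothing about the conflict between $A\rightarrow a\omega$ and $A\rightarrow\epsilon$. For the grammar $S\rightarrow aSa \mid \epsilon$ (language $\{a^{2n}\}$), no finite lookahead tells you when to stop applying $S\rightarrow aSa$ and start matching, so no such normal form can be extracted. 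The paper's resolution is different and essential: since the streaming algorithm knows $|w|=n$ in advance, the $\epsilon$-rule is applied at exactly the step $i$ where the number of terminals on the stack plus the number of input symbols already consumed equals $n$ (the test $h+i-1=n$ in Algorithm~1). With that replacement your argument goes through; without it, the ``deterministic parse'' driving both your fingerprint updates and your multi-pass block verification is not well defined. (A second, minor point: on acceptance you must also check that the predicted and actual suffixes have equal \emph{length}, i.e.\ that the height counter returns to $0$, not merely that the fingerprints agree and the input is exhausted.)
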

(Note that our result does not generalize the result of \cite{MMN09}
for {\Dyckt}, because {\Dyckt\ does not belong to \DLIN}.) However,
Theorem \ref{thm:dlin-upperbound} cannot be improved.
\begin{theorem}
\label{thm:dlin-lowerbound}
Let \[ \otDyckt=\{ w \bar{w}^R: w \in \{ (, [ \}^n,\  n\geq 1 \} ,\]
where $\bar{w}$ is the string obtained from $w$ by replacing
each opening parenthesis by its corresponding closing parenthesis;
$\bar{w}^R$ is the reverse of $\bar{w}$. 
\begin{enumerate}
\item Any $p$-pass randomized streaming algorithm that determines
membership in $\otDyckt$ with probability of error 
bounded by $\epsilon < \frac{1}{2}$, must use
$\Omega((\log n)/p)$ space.
\item Any $p$-pass deterministic streaming algorithm that determines
membership in $\otDyckt$ must use $\Omega(n/p)$ space.

\end{enumerate}

\end{theorem}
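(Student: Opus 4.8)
The plan is to obtain both bounds by reducing the communication problem \EQ\ (EQUALITY) to membership testing, splitting the stream at its midpoint. Given an instance $x,y \in \{0,1\}^n$ of $\EQ$, map $0 \mapsto ($ and $1 \mapsto [$ to turn $x$ into a word $u$ and $y$ into a word $v$ over $\{(,[\}^n$. Let Alice hold the prefix $u$ and Bob hold the suffix $\bar v^R$ (the reverse of the complement of $v$). The concatenated stream $u\bar v^R$ belongs to $\otDyckt$ exactly when $u = v$, i.e. when $x = y$. Hence any algorithm testing membership in $\otDyckt$ on such inputs of length $2n$ decides $\EQ_n$.

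I would then convert a $p$-pass, $s$-space streaming algorithm into a communication protocol in the usual way: cut the input between the two halves, and in each pass let the parties exchange the memory state whenever the reading head crosses the cut (once in each direction per pass). This costs at most $2p$ messages of $s$ bits, i.e. $O(ps)$ bits of communication, and the protocol accepts iff the algorithm accepts. For Part 2 this immediately yields $O(ps) \ge D(\EQ_n) = \Omega(n)$ via the standard fooling-set lower bound for deterministic EQUALITY, so $s = \Omega(n/p)$.

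For Part 1 the key observation is that the simulation produces a \emph{private}-coin protocol: the random choices the algorithm makes while scanning Alice's half are independent of those it makes on Bob's half, and only the $s$-bit state --- never the coins --- crosses the cut. A $p$-pass randomized algorithm with space $s$ and error $\epsilon < 1/2$ therefore gives a private-coin protocol for $\EQ_n$ using $O(ps)$ bits with the same error, and the classical lower bound $R^{\mathrm{priv}}_\epsilon(\EQ_n) = \Omega(\log n)$ then forces $s = \Omega((\log n)/p)$.

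The delicate point, and the one I expect to be the main obstacle, is precisely this private- versus public-coin distinction: under shared (public) randomness EQUALITY can be solved with $O(1)$ communication, which would destroy the argument. I must argue carefully that the streaming simulation cannot transmit shared randomness across the cut --- intuitively, the bounded state is all that survives from Alice's portion of the run --- which is exactly the phenomenon that both rules out an $o(\log n)$-space fingerprinting algorithm and makes the $\Omega(\log n)$ private-coin bound the correct one to invoke.
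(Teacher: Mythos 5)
Your proposal is correct and follows essentially the same route as the paper: reduce \EQUALITY\ to membership in \otDyckt\ via the same encoding, simulate the $p$-pass algorithm as a protocol exchanging the $s$-bit memory state at the cut ($O(ps)$ communication), and invoke the deterministic $\Omega(n)$ and randomized $\Omega(\log n)$ bounds for \EQUALITY. Your explicit attention to the private- versus public-coin distinction is the right instinct (the paper leaves it implicit), and the standard resolution is that the streaming model charges any remembered random bits against the $s$-bit state, so the induced protocol is indeed private-coin.
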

This result is obtained by deriving, from the streaming algorithm, a
two-party communication protocol for determining if two strings are
equal, and then appealing to known lower bounds for the communication
problem.

We next investigate if efficient membership testing is possible for
languages in classes larger than {\DLIN}. Similar, fingerprinting
based algorithms apply to the class $\LL(k)$, but their efficiency depends
on a certain parameter based the underlying grammar $G$. In order to
state our result precisely, we now define this parameter.

Let $L$ be a language generated by an $\LL(k)$ grammar $G$.  For a
string $w \in L$, let $\rank_G(w)$ denote the maximum number of
nonterminals in any sentential form arising in the (unique) leftmost
derivation generating $w$.  Let the rank of the grammar, $rank_G: \nat
\rightarrow \nat$, be defined as $\rank_G(n)= \max_{w \in \{0,1\}^n
\cap L(G)} \rank_G(w)$.  We will assume that $\rank_G(n)$ is a
well-behaved function, say it is log-space computable.

%A language is said
%to be \emph{left derivation bounded} if it is generated by a grammar for 
%which $\rank_G(n)$ is bounded by a constant($\exists r \in \nat,~\forall
%n,~rank_G(n) \leq r$). 
\begin{theorem}
\label{thm:llk}
Let $G$ be an $\LL(k)$ grammar. 
There is a randomized 
%will give the exact space bound rather than O() notation.
one-pass streaming algorithm that given an input $w \in \{0,1\}^n$
and a positive integer $b$, using space $O(b \log n)$(the dependence on $k$ varies with the grammar),
\begin{enumerate}
\item accepts with probability $1$ if $w \in L(G)$ and $\rank_G(w)
\leq b$; 
\item rejects with probability at least $1-\frac{1}{n}$ if $w
\notin L(G)$ or $\rank_G(w) > b$.
\end{enumerate}
\end{theorem}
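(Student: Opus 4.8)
The plan is to adapt the fingerprinting-based streaming algorithm for $\DLIN$ (Theorem~\ref{thm:dlin-upperbound}) to the richer setting of $\LL(k)$ grammars. The key conceptual difference is that for $\DLIN$, every sentential form in the leftmost derivation contains at most one nonterminal, so the algorithm need only maintain a single fingerprint summarizing the "pending obligation." For $\LL(k)$, a sentential form can contain up to $\rank_G(w)$ nonterminals, so we will instead maintain a stack-like structure of fingerprints, one per pending nonterminal. The promise $\rank_G(w) \leq b$ guarantees that at most $b$ such fingerprints are ever live simultaneously, which is what yields the $O(b \log n)$ space bound — each fingerprint is a value modulo a randomly chosen prime of $O(\log n)$ bits.

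First, I would set up the parsing invariant. Since $G$ is $\LL(k)$, on reading the input left-to-right with $k$ symbols of lookahead, the leftmost derivation is determined: at each step the parser pops the top nonterminal from its stack and, using the next $k$ input symbols via the \firstk\ sets, deterministically selects the unique production to apply, pushing the right-hand side (reversed) onto the stack and consuming the terminal prefix. The obstacle to running this parser directly in small space is that the stack can grow to linear height even when the \emph{number of nonterminals} is small, because long runs of terminals sit between nonterminals. The central idea, following Magniez et al.~\cite{MMN09}, is to \emph{not store the terminals explicitly} but instead compress each maximal terminal segment into a fingerprint: as we read input terminals, we check them incrementally against the expected terminal strings predicted by the derivation, and we only keep a hashed summary rather than the literal symbols. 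Thus the space is governed by the number of nonterminal "frames" on the stack, which is exactly $\rank_G(w)$, rather than by the total derivation length.

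Next I would specify the fingerprinting scheme and the accept/reject logic. We pick a random prime $p$ of $\Theta(\log n)$ bits and evaluate a polynomial fingerprint (Karp–Rabin style) of the terminal strings modulo $p$; we maintain, for each pending nonterminal frame, the fingerprint of the terminal string it still owes and the fingerprint of the terminals read so far against it, and we compare them when the frame is resolved. The algorithm accepts iff the parse completes with the stack empty, all fingerprint comparisons succeed, and the rank bound $b$ is never exceeded during the run. For the one-sided error guarantee: if $w \in L(G)$ and $\rank_G(w) \leq b$, the true leftmost derivation exists, all predicted and actual terminal strings genuinely match, so every fingerprint comparison passes regardless of the random prime, giving acceptance with probability $1$. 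If $w \notin L(G)$ (or $\rank_G(w) > b$), then either the deterministic $\LL(k)$ parser fails structurally — wrong lookahead, no applicable production, or the stack never empties, all of which we detect with certainty — or it fails only on a terminal mismatch, in which case two distinct strings produce equal fingerprints only when the random prime divides their (polynomially bounded) difference, an event of probability at most $\frac{1}{n}$ by choosing $p$ from a large enough range and a union bound over the $O(n)$ comparisons.

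\textbf{The main obstacle} I expect is the bookkeeping needed to run the $\LL(k)$ parser while compressing terminals, specifically reconciling the stack semantics with the fingerprinting. In $\DLIN$ the right-hand sides are linear, so terminals appear in a simple pattern flanking a single nonterminal; in general $\LL(k)$ a production's right-hand side interleaves several terminals and nonterminals arbitrarily, so a single "frame" must track which terminal blocks lie before, between, and after its child nonterminals, and fingerprints must be combined (using modular concatenation of Karp–Rabin hashes) as frames are pushed and popped in the correct order. Making this precise — arguing that the compressed stack faithfully simulates the true leftmost derivation and that the live-frame count equals the number of nonterminals in the current sentential form up to a constant factor absorbed into the $O(b\log n)$ bound — is the technical heart of the proof. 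The dependence on $k$ enters through the size of the \firstk\ lookahead tables and the maximum right-hand-side length, which is a constant of the fixed grammar $G$ (hence the parenthetical remark that the dependence on $k$ varies with the grammar), and I would handle it by preprocessing these constant-size tables before the pass.
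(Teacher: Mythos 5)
Your proposal matches the paper's proof: both simulate the deterministic $\LL(k)$ top-down parser with a compressed stack holding one frame per pending nonterminal (a fingerprint of the owed terminal segment, the nonterminal, and the segment length), so that the number of live frames is bounded by $\rank_G(w)\le b$ at $O(\log n)$ bits each, with structural parse failures and rank overflow detected deterministically and terminal mismatches caught by the fingerprint with probability at least $1-\frac{1}{n}$. The only, immaterial, difference is the fingerprinting primitive: the paper fixes a prime $p\in[n^{c+1},2n^{c+1}]$ and evaluates the difference polynomial at a random $\alpha\in\mathbb{F}_p$ (Schwartz--Zippel), whereas you randomize over the prime modulus in Karp--Rabin style; both yield the same one-sided error guarantee.
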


\begin{corollary}
Let $L$ be a language generated by an $\LL(k)$ grammar $G$.
There is a randomized 
%will give the exact space bound rather than O() notation.
one-pass streaming algorithm that given an input $w \in \{0,1\}^n$,
using space $O(\rank_G(n) \log n)$,
\begin{enumerate}
\item accepts with probability $1$ if $x \in L$; 
\item rejects with probability at least $1-\frac{1}{n}$ if $x\not\in L$.
\end{enumerate}
\end{corollary}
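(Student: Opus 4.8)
The plan is to obtain the corollary as an immediate instantiation of Theorem~\ref{thm:llk}, taking the parameter $b$ to be the rank of the grammar at the input length. First I would set $b = \rank_G(n)$ and invoke the one-pass algorithm of Theorem~\ref{thm:llk} with this choice. Since the value $\rank_G(n)$ is assumed to be log-space computable, it can be produced from $n$ within $O(\log n)$ space, which is dominated by the final space bound; so feeding this $b$ to the algorithm costs nothing extra.

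To verify the first guarantee, I would appeal to the definition $\rank_G(n) = \max_{w \in \{0,1\}^n \cap L(G)} \rank_G(w)$. If $x \in L$ with $|x| = n$, this maximum immediately gives $\rank_G(x) \le \rank_G(n) = b$, so the pair $(x, b)$ satisfies the hypothesis of part~(1) of Theorem~\ref{thm:llk}, and the algorithm accepts with probability $1$. For the second guarantee, if $x \notin L$ then the hypothesis of part~(2) of Theorem~\ref{thm:llk} holds (the condition $x \notin L(G)$ alone suffices, irrespective of rank), and the algorithm rejects with probability at least $1 - \frac{1}{n}$.

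The space bound will follow directly: Theorem~\ref{thm:llk} uses $O(b \log n)$ space, and substituting $b = \rank_G(n)$ yields $O(\rank_G(n) \log n)$, as claimed.

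The only delicate point---and the closest thing to an obstacle---is making the threshold $b = \rank_G(n)$ available within the claimed budget. This is precisely what the well-behavedness (log-space computability) assumption on $\rank_G$, imposed in the setup before Theorem~\ref{thm:llk}, is there to provide; it ensures the threshold can be computed from $n$ without exceeding $O(\rank_G(n)\log n)$ space. Beyond this bookkeeping, no new error is introduced and no further argument is needed, since all the work is already carried out by Theorem~\ref{thm:llk}.
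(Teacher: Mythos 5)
Your proposal is correct and matches the paper's intended derivation: the corollary is obtained by instantiating Theorem~\ref{thm:llk} with $b = \rank_G(n)$, using the standing assumption that $\rank_G$ is log-space computable to make this threshold available within the space budget. No further comment is needed.
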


Note that the above result does not give efficient streaming
algorithms unconditionally, for the space required depends on
$\rank_G(x)$, which in general may grow as $\Omega(n)$.  Note,
however, that results based on such properties of the derivation have
been considered in the literature before. In fact, the class of left derivation bounded languages defined 
by Walljasper~\cite{Wall74}, consists precisely of languages for which $\rank_G(n)$ is a constant independent 
of $n$; this class was also shown to be closed under AFL operations in~\cite{Wall74}. Many well-studied classes of languages are subclasses
of left derivation bounded languages.  The nonterminal bounded
languages generated by nonterminal bounded grammars (which have a
bounded number of nonterminals in the sentential forms in \emph{any}
derivation) were studied by Workman~\cite{Work76}, who also proved
that they contain all ultralinear languages, which are languages
accepted by finite turn pushdown automata (defined by Ginsburg et al.~\cite{Gins66}).
However, nonterminal bounded grammars need not be \LLk.

Despite the dependence on $\rank_G(n)$, the above corollary is
 applicable to classes of languages such as \restVPL\ defined in~\cite{BLV10} (this considered the restriction of {\VPL}s which have
      {\LLone} grammars) and {\DLIN}s restricted to grammars without
      derivations of the form $A\rightarrow \epsilon$. For these
classes $\rank_G(n)$ is bounded by a constant independent of $n$.

%Proof or reference for the below statement required
%It is interesting to observe that the normal form grammars for visibly pushdown
%languages(see \cite{AM04}), \VPL s, are \LLone .  
%  We consider some
%subclasses of \CFL s for which space bounded algorithms (with space bound
%$O(\log ^2 n)$) are already known.  We study
%the number of passes versus space trade-offs for these problems.
%
%%
%{\bf
%Formal languages\\
%
%  - Assume that reader is familiar with classical language recognition problems.
%    give references.\\
%  - Definitions of CFL, DLIN, LL(k) and DCFL. etc.\\
%  - Languages defined in terms of grammar and computational models are fundamental.\\
%\\
%
%
%Small space and streaming algorithms for language recognition\\
%
%  - Previous relevant results on small space algorithms for these classes.\\
%  - This work was motivated by the recent result of Magniez et al. for recognising ...\\
%  - Fingerprinting as a tool\\
%}
%
%
%
%{\bf
%Our results on formal languages\\
%
%  - Based on this work, we examine how general their method. \\
%  - We show results of both kinds. This method does apply to a class of languages.\\
%    But we show that we cannot go very far.\\
%}

We now turn to show that classes provably do not admit solutions
in the streaming model with polylogarithmic space.  Lower bounds for
membership testing of context-free languages in the streaming model
were studied by Magniez et al.~\cite{MMN09}.  They proved that any one-pass
randomized algorithm requires $\Omega(\sqrt{n\log n})$ space for
testing membership in \Dyckt.  More recently, Jain et al.~\cite{JainN10} proved
that if the passes on the input are made only from left to right, then
in spite of making $p$ passes on the input,
the membership testing for \Dyckt\ requires $\Omega(\sqrt{n}/p)$ space.
Here we prove that in general for languages in \DCFL, no savings in
space over the trivial algorithm of simulating the \PDA\ can be expected.
%TODO: we need to be clear about multi-pass.
\begin{theorem}
\label{thm:vpl-lowerbound}
There exists a language $L \in \VPL \subseteq \DCFL$ such that any randomized $p$-pass
streaming algorithm requires $\Omega(n/p)$ space for testing membership in
$L$ with probability of error at most $\epsilon < \frac{1}{2}$.
\end{theorem}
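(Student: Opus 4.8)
The plan is to exhibit a single visibly pushdown language $L$ whose membership problem encodes set disjointness, and then convert any efficient streaming algorithm for $L$ into a cheap communication protocol for $\DISJ$. Since the randomized communication complexity of disjointness is $\Omega(n)$ even when a constant error below $1/2$ is allowed --- in contrast to $\EQ$, which admits an $O(\log n)$ randomized protocol and therefore only yielded the $\Omega((\log n)/p)$ bound of Theorem~\ref{thm:dlin-lowerbound} --- this will force the space to be $\Omega(n/p)$.

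First I would fix a visibly pushdown alphabet with call symbols $\{a_0,a_1\}$ and return symbols $\{b_0,b_1\}$, and define $L$ to be the set of well-matched strings $a_{x_1}\cdots a_{x_n}\,b_{y_n}\cdots b_{y_1}$ (for some $n\ge 1$) such that no matched call/return pair carries the bit $1$ on both sides; that is, $\neg(x_i=1 \wedge y_i=1)$ for every $i$. A visibly pushdown automaton recognises $L$ by pushing the bit $c$ when it reads a call $a_c$ and, on reading a return $b_d$, popping the top bit $c'$ and rejecting if $c'=d=1$; it accepts exactly when the stack empties without a violation. Hence $L\in\VPL\subseteq\DCFL$. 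By construction, for $x,y\in\{0,1\}^n$ the string $w(x,y)=a_{x_1}\cdots a_{x_n}\,b_{y_n}\cdots b_{y_1}$ lies in $L$ if and only if the sets encoded by $x$ and $y$ are disjoint, so membership in $L$ decides $\DISJ(x,y)$.

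Next I would turn a $p$-pass randomized streaming algorithm $\mathcal{A}$ for $L$ using space $s$ into a two-party protocol. The key point is that in $w(x,y)$ Alice's data form the prefix $a_{x_1}\cdots a_{x_n}$ and Bob's the suffix $b_{y_n}\cdots b_{y_1}$, so each left-to-right pass crosses the prefix/suffix boundary exactly once. Alice simulates $\mathcal{A}$ on her prefix and sends the $s$-bit memory contents to Bob, who continues on his suffix; to begin the next pass Bob returns the current state to Alice. Over $p$ passes this exchanges at most $(2p-1)s$ bits, and the protocol errs exactly when $\mathcal{A}$ does, i.e.\ with probability at most $\epsilon$.

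Finally, appealing to the $\Omega(n)$ lower bound on the $\epsilon$-error randomized communication complexity of $\DISJ$, I obtain $(2p-1)s=\Omega(n)$ and therefore $s=\Omega(n/p)$. The main obstacle is the choice of the hard communication primitive: the $\EQ$-based reduction of Theorem~\ref{thm:dlin-lowerbound} cannot give more than a logarithmic randomized bound, so one must (i) design $L$ so that the harder predicate $\DISJ$ is embedded \emph{and} $L$ remains a visibly pushdown language with Alice's and Bob's inputs cleanly separated into a prefix and a suffix, and (ii) verify that the boundary is crossed only $O(p)$ times, so that the simulation cost stays $O(ps)$ and the linear communication lower bound transfers intact.
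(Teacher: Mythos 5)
Your proposal is correct and follows essentially the same route as the paper: the paper's hard language, generated by $S \rightarrow [S] \mid [S) \mid (S] \mid \epsilon$, is exactly your one-turn nested language forbidding a matched pair with $1$ on both sides, and the reduction from set disjointness via a $(2p-1)s$-bit protocol is identical. No gaps.
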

The language $L$ in the above result is a slight modification of
\Dyckt. This result is proved by reducing the membership problem in
the streaming model to the two-party communication problem of checking
whether two subsets of an an $n$-element universe are disjoint. 
%{\bf
%The Degree sequence problem and our results\\
%
%   - Definition\\
%   - Well-studied problem in graph theory\\
%   - The problem we study, why is it a natural problem to consider\\
%     where fingerprinting is relevant\\
%   - Our result\\
%}

The upper bounds above show that the method of fingerprinting can
be fruitfully applied to many problems to check equality of elements
located far away in the input string. We provide one more illustration
of the amazing power of this technique.

\paragraph{Degree-Sequence, \DEGSEQ:} The degree sequence problem
is the following.
\begin{description}
\item[Input:] A positive integer $n$ and sequence of directed edges
$$(u_1,v_1),(u_2,v_2),\ldots,(u_m,v_m) \text{ where } u_i,v_i \in \{1, 2, \ldots, n\}$$
on vertex set $\{1, 2, \ldots, n\}$.
\item[Task:] Determine if vertices $1, 2,\ldots, n$ 
             have out-degrees $d_1, d_2, \ldots, d_n$, respectively?
\end{description}

This problem is known to be in log-space (in fact in \TCz\ (see for example
\cite{voltext})). It has been observed \cite{DSLN09,FKSV02} that the complexity of graph
problems changes drastically depending on the order in which the input
is presented to the streaming algorithm.  If the input to \DEGSEQ\ is
such that the degree of a vertex along with all the edges out of that
vertex are listed one after the other, then checking whether the graph
has the given degree sequence is trivial.  If the degrees sequence is
listed first, followed by the adjacency list of the graph then we
observe that a one-pass deterministic algorithm needs $\Omega(n)$
space to compute \DEGSEQ.  For a more general ordering of the input
where the degree sequence is followed by a list of edges in an
arbitrary order, we prove the following theorem:

\begin{theorem}
  \label{thm:deg-seq}
  If the input is a degree sequence followed by a list of edges in an arbitrary
  order, then \DEGSEQ\ can be solved 
  \begin{enumerate} 
    \item by a one-pass, $O(\log n)$ space randomized streaming algorithm such
    that if vertices $1,2,\ldots, n$ have out-degrees $d_1, d_2, \ldots,
    d_n$, respectively, then the algorithm accepts with probability $1$ and
    rejects with probability $1-\frac{1}{n}$, otherwise. 
    \item by a $p$-passes, ${O}(( n \log n) /p)$-space deterministic streaming 
    algorithm. 
  \end{enumerate}
\end{theorem}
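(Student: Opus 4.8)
The plan is to treat both parts as verifying that two vectors of out-degrees agree: the \emph{claimed} vector $(d_1,\dots,d_n)$ read off the prefix of the stream, and the \emph{actual} vector $(c_1,\dots,c_n)$, where $c_j$ is the number of edges $(u_i,v_i)$ with $u_i=j$. Note that only the source $u_i$ of each edge is relevant, so $v_i$ can be discarded on the fly. For part~(1) I would fingerprint these two vectors by evaluating an associated polynomial at a random point, in the spirit of \cite{MMN09}. Concretely, fix a prime $p$ polynomial in $n$ (see below) and pick $r\in\F_p$ uniformly at random. While reading the degree sequence I maintain the running power $r^j$ (multiplying by $r$ at each step $j=1,\dots,n$) and accumulate $S_1=\sum_{j=1}^n d_j\, r^j \bmod p$; while reading the edges I accumulate $S_2=\sum_i r^{u_i}\bmod p$, computing each $r^{u_i}$ by fast exponentiation. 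The algorithm accepts iff $S_1=S_2$. Since each stored quantity is a single element of $\F_p$ together with $O(\log n)$-bit scratch space for the exponentiations, the total space is $O(\log p)=O(\log n)$.

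For correctness, set $Q(x)=\sum_{j=1}^n (c_j-d_j)\,x^j$, so that $S_2-S_1\equiv Q(r)\pmod p$ because $S_2=\sum_j c_j r^j$. If the degree sequence is correct then $c_j=d_j$ for every $j$, hence $Q\equiv 0$ and $S_1=S_2$ with probability $1$, giving the one-sided guarantee. If it is incorrect then $c_j\neq d_j$ for some $j$, and here lies the point I expect to require the most care: the test only compares the two fingerprints modulo $p$, so I must choose $p$ large enough that $c_j-d_j\not\equiv 0\pmod p$ whenever $c_j\neq d_j$. Because the stated degrees and the edge count $m$ are polynomially bounded in $n$ (each $d_j$ occupies $O(\log n)$ bits of the stream, so $d_j=\mathrm{poly}(n)$, and likewise $c_j\le m=\mathrm{poly}(n)$), it suffices to take $p$ a fixed prime exceeding this polynomial bound and with $p\ge n^2$. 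Then $Q$ reduces to a nonzero polynomial of degree at most $n$ over the field $\F_p$, so it has at most $n$ roots and $\Pr_r[Q(r)=0]\le n/p\le 1/n$, bounding the false-accept probability by $1/n$ as required. If one prefers not to assume a magnitude bound, the same effect is obtained by also drawing $p$ at random from a suitable range and using that a random prime rarely divides a fixed nonzero integer.

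For part~(2) I would trade randomness for passes by splitting the work across the vertex set. Partition $\{1,\dots,n\}$ into $p$ blocks $B_1,\dots,B_p$ of size $n/p$ each, and devote pass $k$ to block $B_k$: as the degree prefix streams by I store the $n/p$ values $\{d_j: j\in B_k\}$, and as the edges stream by I maintain a counter $c_j$ for each $j\in B_k$, incrementing it whenever an edge has source $j\in B_k$. At the end of pass $k$ I reject if $c_j\neq d_j$ for some $j\in B_k$; if all $p$ passes succeed the algorithm accepts. This is deterministic and exactly correct, and each pass stores $O(n/p)$ integers bounded by $\mathrm{poly}(n)$, for $O((n\log n)/p)$ space, matching the claim. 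The randomized part is thus a direct fingerprinting argument whose only genuinely delicate ingredient is the choice of modulus discussed above; the deterministic part is a routine block decomposition.
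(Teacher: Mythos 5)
Your proposal matches the paper's proof essentially exactly: part (1) is the same random evaluation of the polynomial $q(x)=\sum_j d_j x^j-\sum_{i} x^{u_i}$ at a point of $\F_p$ with $p=\mathrm{poly}(n)$, followed by Schwartz--Zippel, and part (2) is the same block-by-block deterministic verification of $n/p$ vertices per pass. Your extra care in choosing $p$ large enough that a nonzero integer coefficient $c_j-d_j$ cannot vanish modulo $p$ addresses a point the paper leaves implicit, and you handle it correctly.
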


We also show that the above result is optimal up to a $\log n$ factor.

\begin{theorem}\
  \label{thm:deg-seq-lowerbound}
  \begin{enumerate} 
\item Any $p$-pass randomized streaming algorithm for  \DEGSEQ\ with probability of error bounded by $\epsilon < \frac{1}{2}$, must use
$\Omega((\log n)/p)$ space.
\item Any $p$-pass deterministic streaming algorithm for \DEGSEQ\ must use $\Omega(n/p)$ space.
  \end{enumerate}
\end{theorem}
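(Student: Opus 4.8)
The plan is to derive communication-complexity lower bounds from any streaming algorithm for $\DEGSEQ$, using reductions from the two classic problems: $\EQUALITY$ for the randomized bound and a "fooling set" / deterministic-$\EQUALITY$ argument for the deterministic bound. The structure mirrors the proof of Theorem~\ref{thm:dlin-lowerbound}, where a $p$-pass streaming algorithm is converted into a communication protocol by having the two players exchange the memory state of the algorithm each time the read head crosses the boundary between their respective portions of the input. Since a $p$-pass algorithm crosses the boundary $O(p)$ times, an algorithm using $s$ bits of space yields a protocol with $O(ps)$ bits of communication; plugging in the $\Omega(\log n)$ randomized and $\Omega(n)$ deterministic lower bounds for $\EQUALITY$ then gives $s = \Omega((\log n)/p)$ and $s=\Omega(n/p)$ respectively.

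The main work is to exhibit a \emph{reduction} embedding an $\EQUALITY$ instance into a $\DEGSEQ$ instance so that a correct answer to $\DEGSEQ$ determines equality. Given two strings $a, b \in \{0,1\}^N$, I would build an instance on a universe of $n = \Theta(N)$ vertices in which the first player (holding $a$) controls the \emph{degree-sequence} part of the input and the second player (holding $b$) controls the \emph{edge-list} part (this split respects the stated input order: degree sequence first, edges after). A clean encoding is to let the claimed out-degree $d_i$ of vertex $i$ be determined by bit $a_i$, and to let Bob emit, for each $i$, a number of edges out of vertex $i$ determined by bit $b_i$; the graph then has exactly the claimed degree sequence if and only if $a_i = b_i$ for all $i$, i.e. $a = b$. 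One must choose the degree values so that a single mismatch in any coordinate forces a discrepancy between the claimed degree and the realized out-degree, making the $\DEGSEQ$ instance a faithful encoding of $\EQUALITY$.

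For the deterministic part, I would invoke the classical $\Omega(n)$ deterministic communication lower bound for $\EQUALITY$ (equivalently, exhibit a fooling set of size $2^N$ among the $\DEGSEQ$ instances: the $2^N$ inputs of the form "degrees from $a$, edges from $a$" are all accepted, while swapping the degree-part of one accepted input with the edge-part of another yields a rejected instance, so no two can share a transcript). Combined with the boundary-crossing simulation, this gives the $\Omega(n/p)$ bound of part~(2). For the randomized part, the same reduction together with the $\Omega(\log n)$ randomized lower bound for $\EQUALITY$ (over inputs of length $N=\Theta(n)$, against error $\epsilon<\tfrac12$) yields part~(1); here I must make sure the reduction preserves the error probability, which it does since it is a deterministic mapping of inputs to inputs.

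The step I expect to be the main obstacle is arranging the reduction so that Alice's data lands entirely in the degree-sequence prefix and Bob's data lands entirely in the edge-list suffix, \emph{while keeping a one-to-one correspondence between coordinates of the $\EQUALITY$ instance and a single boundary crossing in the stream.} In particular, one must check that a per-coordinate encoding (rather than a global hash of $a$ versus $b$) is what $\DEGSEQ$ naturally computes, so that equality is detected coordinatewise; verifying that a single bit flip always changes some realized out-degree — and never accidentally cancels against another flip — is the delicate point. Once the encoding is fixed, the two bounds follow immediately from the generic streaming-to-communication simulation and the known complexities of $\EQUALITY$.
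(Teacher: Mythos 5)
Your proposal is correct and follows essentially the same route as the paper: the paper also reduces \EQUALITY\ to \DEGSEQ\ by setting $d_i = x_i$ and adding a self-loop $(i,i)$ whenever $y_i = 1$ (so the realized out-degree of vertex $i$ is exactly $y_i$ and the instance is a yes-instance iff $x=y$), then invokes the $(2p-1)s$ boundary-crossing simulation from the proof of Theorem~\ref{thm:dlin-lowerbound} together with the standard deterministic and randomized lower bounds for \EQUALITY. The ``cancellation'' worry you flag is a non-issue since the degree sequence is compared coordinatewise, so any single mismatched vertex already falsifies the instance.
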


% Section 2

\section{Membership testing of \DLIN} 
\label{sec:dlin}
In this section, we study the complexity of membership testing for a
subclass of context free languages called \DLIN, in the streaming model.  Informally it is the class
of languages accepted by $1$-turn \PDA(i.e. \PDA\ which do not make a push move
after having made a pop move), with restrictions similar to \LL$(1)$.

We start with some definitions.  See \cite{HMU06} for the basic definitions
regarding context-free grammars(\CFG) and pushdown automata(\PDA).

\begin{definition}[Higuera et al.\cite{Hig02}] 
\textbf{Deterministic linear CFG} or \DLCFG, is a \CFG\ $(\Sigma, N, P,S)$ for
which, every production is of the form $A \rightarrow a \omega$ or $A
\rightarrow \epsilon$, where $a \in \Sigma$ and $\omega \in (N \cup \{ \epsilon
\})\Sigma^*$ and for any two productions, $A \rightarrow a \omega$ and $
B\rightarrow b \omega'$, if $A=B$ then $a\neq b$, where $a,b \in \Sigma$ and
$\omega, \omega' \in (N \cup \{ \epsilon \})\Sigma^*$.
\end{definition}

\begin{definition} 
\textbf{Deterministic linear CFL}, \DLIN, is the class of languages for which
there exists a \DLCFG\ generating it.
\end{definition}

\DLIN\ is a well studied class in language theory.  Higuera et al.
\cite{Hig02} gives algorithms for learning such grammars.  Many variations of the
above definition have been considered in earlier works.  The above definition is
more general than the ones given in \cite{IJR88,HL93} as was proved in
\cite{Hig02}.  Note that the set of languages accepted by deterministic 1-turn
\PDA\ is a strict super-set of \DLIN.  For example $L = \{a^nb^n \text{ or }
a^nc^n \mid n >0\} \notin \DLIN$ but is accepted by a deterministic 1-turn \PDA
.

%TODO: Say more about DLIN : It is the class of languages accepted by 
%1-turn PDA with restrictions similar to LL(1)

\begin{definition} 
\textbf{Canonical Pushdown Automaton} or \CPDA\ for a language $L$ generated by
a \CFG\ $G=(\Sigma, N, P,S)$ is a \PDA\ $M_L=(Q = \{q\},\Sigma,\Gamma = N \cup
\Sigma, \delta,q_0 = q,S)$ , where the transition function $\delta$ is defined
as follows:
\begin{enumerate}
  \item for each production of the form $A\rightarrow a\omega$ where $\omega \in (N\cup \Sigma)^*$,
  $\delta(q,a,A)=(q,\omega)$\footnote{$\delta(q,a,A)=(q,\omega)$ implies that
  when the \PDA\ is at state $q$, has $a$ as the next input symbol and $A$ on
  top of stack, will remain in state $q$, replacing $A$ by $\omega$ at the top
  of the stack.}.
  \item for every production $A \rightarrow \omega$ that is not considered above, 
  $\delta(q,\epsilon, A) = (q,\omega)$.
  \item for all $a\in \Sigma$, $\delta(q,a,a)=(q,\epsilon)$.
\end{enumerate}
$M_L$ starts with only the start symbol $S$ on the stack and it accepts by empty
stack. The language accepted by $M_L$ is $L$.
\end{definition}

If the rules of the form $A \rightarrow \epsilon$ are removed from a \DLCFG,
then the corresponding \CPDA\ is deterministic.  However if the length of the
string is known before hand, then we can infer when such a rule is to be
applied.  It is precisely when the sum of the length of the string seen so far
and the number of nonterminals in the stack add up to the total length.  So the
\CPDA\ can be simulated deterministically by making only a single pass over
the input, but the stack can take up $\Omega(n)$ space.  The algorithm for
membership testing of \DLIN\ is obtained by simulating the \CPDA\ with a
compressed stack.  The stack is compressed by using a hash function which is a
random evaluation of a polynomial constructed from the stack.  This method
commonly known as fingerprinting (see \cite{MR96}, Chapter 7) was used by Magniez et al.
\cite{MMN09}, for giving a streaming algorithm for membership testing of \Dyckt.
Here we apply the technique to the class of \DLIN .  Note that \Dyckt\ is not
contained in \DLIN.

\subsection{Compressing the stack}
\label{dlin-comp}

First we make an observation about the stack of a \CPDA\ for a language $L$,
generated by a \DLCFG.

\begin{obs}
\label{obs:stack} 
For any string $w \in L$ and at any step $i \in [|w|]$, the stack of the \CPDA\
contains at most one nonterminal.
\end{obs}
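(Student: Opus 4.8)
The plan is to prove a stronger invariant than the one stated, by induction on the number of moves made by the \CPDA, and then read off the observation as an immediate consequence. Specifically, I would show that at every configuration reached while processing $w$, the stack content read from top to bottom has the form $X\alpha$, where $X \in N \cup \{\epsilon\}$ and $\alpha \in \Sigma^*$. In words: the stack contains at most one nonterminal, and whenever a nonterminal is present it sits at the very top of the stack, with only terminals beneath it. The observation is exactly the ``at most one nonterminal'' part of this invariant.

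For the base case, the computation begins with the stack containing only the start symbol $S$, which matches the form $X\alpha$ with $X = S$ and $\alpha = \epsilon$. For the inductive step I would examine the three kinds of transitions of the \CPDA, assuming the stack currently reads $X\alpha$ with $\alpha \in \Sigma^*$.
\begin{enumerate}
\item If $X = A$ is a nonterminal on top and a rule $A \rightarrow a\omega$ is applied (rule~1), the automaton reads $a$ and replaces $A$ by $\omega$, so the new stack reads $\omega\alpha$. Since $G$ is a \DLCFG, $\omega \in (N \cup \{\epsilon\})\Sigma^*$, i.e. $\omega$ is either a single nonterminal followed by terminals or just terminals; in either case $\omega\alpha$ again has the form $X'\alpha'$ with at most one nonterminal, occurring at the top.
\item If $X = A$ and the $\epsilon$-rule $A \rightarrow \epsilon$ is applied (rule~2), then $A$ is popped and the new stack is $\alpha \in \Sigma^*$, which has no nonterminal at all.
\item If $X$ is absent (the top is a terminal $a$), then by the invariant the whole stack is in $\Sigma^*$; rule~3 simply pops $a$ after matching it against the input, leaving a string in $\Sigma^*$.
\end{enumerate}
In every case the resulting stack is again of the form $X\alpha$, which completes the induction.

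The only real subtlety --- and the reason the naive invariant ``at most one nonterminal'' does not by itself close the induction --- is that I must also track \emph{where} the nonterminal sits. If a second nonterminal could ever lie buried beneath terminals, then expanding the top nonterminal via rule~1 (which itself can introduce a nonterminal) might create two simultaneously. The strengthened invariant rules this out precisely because the linear form $\omega \in (N \cup \{\epsilon\})\Sigma^*$ guarantees that the freshly pushed nonterminal, if any, lands on top and is the unique one, while the symbols beneath remain purely terminal. Thus the position-of-the-nonterminal bookkeeping, rather than any hard calculation, is the main thing to get right; once it is in place the three-case check is entirely routine. One may equivalently phrase this as the standard fact that a linear grammar keeps at most one nonterminal in every sentential form, the stack being exactly the suffix of the current sentential form past the already-matched prefix, but the direct induction on the stack is the cleanest route.
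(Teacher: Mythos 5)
Your proof is correct, and it formalizes exactly what the paper treats as immediate: the Observation is stated without proof, relying on the fact that every \DLCFG\ production has right-hand side in $\Sigma(N\cup\{\epsilon\})\Sigma^*$, so the \CPDA\ only ever pushes at most one nonterminal, and it lands on top. Your strengthened invariant (nonterminal, if any, sits at the very top above a string of terminals) is precisely what the paper's subsequent definitions of $\Stack(w,i)$ and $\NonTerm(w,i)$ presuppose, so your argument matches the paper's intended reasoning.
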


Consider the run of the \CPDA\ on $w \in \Sigma^*$ in which any transition of
the form $\delta(q, \epsilon, A) = (q,\epsilon)$ is applied only at the step $i$
when the sum of $i-1$ and the number of terminals in the stack adds up to $|w|$.
For $w \in \Sigma^*$, $i \in [|w|]$, let $\Stack(w,i) \in \Sigma^*$ be the
sequence of terminals in the stack of the \CPDA, when it encounters the $i$th
symbol of input $w$.  We consider it from bottom to the first nonterminal or the top
if there is no nonterminal.  If the \CPDA\ rejects before reaching $i$, then
$\Stack(w,i)$ is not defined.  Similarly, let $\NonTerm(w,i)$ be the unique
nonterminal on top of the stack of the \CPDA, when it has reached position $i$
on input $w$.  If there is no nonterminal in the stack, then it is $\epsilon$.

We will assume a fixed bijective map from $\Sigma = \{a_1,a_2, \ldots, a_m\}$ to $[m]
= \{1,2, \ldots, m\}$.  Furthermore we will use ${\bf a_i}$ to denote the value of
the map on $a_i$.  For any string $v\in \Sigma^*$, a prime $p$, formal variable
$x$, let $$\Compress(v,h,x,p)= \sum_{j=1}^{|v|} {\bf v[j]} x^{h+j-1} \mod p$$ be
a polynomial over $\field_p$.  Then $$\CompStack(w,i,x) =
\Compress(\Stack(w,i),0,x,p)$$ can be considered as an encoding of
$\Stack(w,i)$.

\begin{obs}
$\CompStack(w,n,x)$ has degree at most $n$ and is the zero polynomial if and
only if $w \in L$.
\end{obs}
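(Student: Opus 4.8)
The plan is to peel off the definition and split the claim into a degree bound and a coefficient analysis. Writing $v=\Stack(w,n)$ for the terminal content of the stack when the whole of $w$ has been processed, the definition gives
$$\CompStack(w,n,x)=\Compress(v,0,x,p)=\sum_{j=1}^{|v|}{\bf v[j]}\,x^{\,j-1}\pmod p,$$
so the coefficient of $x^{\,j-1}$ is the value ${\bf v[j]}\in\{1,\dots,m\}$ assigned to the $j$th terminal of $v$ by the fixed bijection $\Sigma\to[m]$. As long as the prime satisfies $p>m$, every coefficient that occurs is a nonzero element of $\field_p$; this is the one arithmetic fact the whole argument rests on.

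For the degree bound I would argue that the terminal part of the stack never has length more than $n$. A terminal is removed from the stack only by a matching move $\delta(q,a,a)=(q,\epsilon)$, which consumes one input symbol, and distinct stacked terminals are matched against distinct input positions; since $w$ has only $n$ positions, at most $n$ terminals can ever be waiting on the stack. (Equivalently, every sentential form of the leftmost derivation has the shape $uAs$ with $u,s\in\Sigma^{*}$ and $|u|+|s|\le n$, so $|v|\le n$.) Hence $\deg\CompStack(w,n,x)\le|v|-1\le n$.

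For the equivalence, the coefficient fact from the first paragraph shows at once that $\CompStack(w,n,x)$ is the zero polynomial if and only if $v=\epsilon$, i.e. the stack holds no terminal below its topmost nonterminal. It then remains to match ``the terminal part of the stack is empty at the end of the run'' with ``$w\in L$''. Here I would invoke that $M_L$ recognises exactly $L$ by empty stack, together with the structural property of a \DLCFG\ that every production other than an $\epsilon$-rule begins with a terminal; consequently a nonterminal can derive $\epsilon$ only by a direct rule $A\to\epsilon$, and the prescribed schedule applies such a rule exactly at the step where $i-1$ plus the number of stacked terminals equals $|w|$. Running the \CPDA\ under this schedule is therefore faithful: it reads all of $w$ and empties the stack completely precisely when $w\in L$. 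If $w\in L$ the final stack is empty, so $v=\epsilon$ and the polynomial vanishes; if $w\notin L$ the run either blocks before the input is exhausted or leaves a terminal on the stack, and then its lowest terminal contributes a nonzero coefficient, so the polynomial is nonzero.

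I expect the last equivalence to be the main obstacle, precisely because $\CompStack$ records only terminals: the vanishing of the polynomial certifies ``no terminals on the stack'', which is a priori weaker than ``empty stack''. The delicate case is an exhausted input with a single nonterminal sitting on an otherwise terminal-free stack, where $v=\epsilon$ would force the polynomial to vanish. The resolution is exactly the \DLCFG\ structure used above: the only way to erase that nonterminal is a direct $A\to\epsilon$ rule, and the schedule fires such a rule at precisely this configuration, so the stack empties (and $w\in L$) when the rule exists and the run rejects otherwise. Pinning down this pairing of the empty-terminal-stack condition with the scheduled elimination of the dangling nonterminal is the step that needs the most care.
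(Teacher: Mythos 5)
The paper states this as an unproved observation, so there is no official proof to compare against; judged on its own terms, your argument has one genuine gap, and it sits exactly at the spot you yourself flag as delicate. Your case analysis for $w\notin L$ (``the run either blocks before the input is exhausted or leaves a terminal on the stack'') omits the case that actually breaks the claim: the \CPDA\ can consume all of $w$ and finish with a single nonterminal and \emph{no} terminals on the stack, where that nonterminal has no $\epsilon$-rule. Concretely, take the \DLCFG\ $S\to aA$, $A\to bB$, $B\to c$, so $L=\{abc\}$, and $w=ab$: the run never blocks, the terminal part of the stack is empty throughout, so $\CompStack(w,\cdot,x)$ is the zero polynomial, yet $w\notin L$. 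Your proposed resolution --- ``the run rejects otherwise'' --- does not close this, because the run rejecting has no effect on the polynomial, which records only the terminal content of the stack and is already zero. The equivalence is simply not true without the extra hypothesis $\NonTerm(w,n)=\epsilon$; this is why a correct argument must track the nonterminal separately (and why the final acceptance test of Algorithm~\ref{alg1}, which as printed checks only $\compstack=0$ and $h=0$, also needs the condition $\nonterm=\epsilon$). Separately, in the ``blocks before the input is exhausted'' case, $\Stack(w,n)$ is undefined by the paper's own convention, so you cannot assert the polynomial is nonzero there; that case has to be excluded from the equivalence rather than folded into it.

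A secondary issue is the degree bound. The observation that ``distinct stacked terminals are matched against distinct input positions'' bounds the number of terminals ever \emph{popped}, not the number still waiting on the stack, and the sentential-form bound $|u|+|s|\le n$ applies only when the run tracks a genuine derivation of $w$, i.e.\ when $w\in L$ (where the polynomial is zero anyway). For $w\notin L$ the terminal stack can grow to $c\cdot n$, where $c$ bounds $|v|$ over rules $A\to aBv$ (try $S\to aSbb\mid c$ on input $a^n$), so the degree is $O(n)$ but not necessarily at most $n$; this is harmless for the algorithm's error analysis but is not the bound claimed. Your coefficient argument (all coefficients lie in $\{1,\dots,m\}$, hence are nonzero in $\field_p$ for $p>m$) and the direction $w\in L\Rightarrow{}$zero polynomial are correct.
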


It is therefore sufficient to check whether $\CompStack(w,n,x)$ is the
zero polynomial for testing membership in $L$.  To explicitly store this
polynomial, $\Omega(n)$ space may be required.  But a random evaluation of a
non-zero degree $d$ polynomial over $\mathbb{F}_p$ is zero with probability
at most $d/p$ (due to Schwartz Zippel Lemma).  Hence it suffices to keep a random
evaluation of $\CompStack(w,i,x)$, which can be stored using just $\lceil \log p
\rceil$ bits, for checking if it is zero.  If $p = O(n)$ then the space needed
is considerably reduced to $O(\log n)$.

\subsection{Algorithm}
The algorithm (Algorithm \ref{alg1}) is obtained by observing that the \CPDA\ can be simulated using a
compressed stack.
\begin{algorithm}
\caption{Randomized one pass algorithm} 
\label{alg1}
\begin{algorithmic}[1]
\STATE Input : $w \in \Sigma^*$. Let $|w|=n$. 
\STATE Pick $\alpha$ uniformly at random from $\mathbb{F}_p$. 
\STATE $\compstack \leftarrow 0; ~ \nonterm \leftarrow S; ~ h \leftarrow 0 $ 
\FOR{$i = 1$ to $n$} 
  \IF {$\nonterm \neq \epsilon$} 
    \IF{$h+i-1 = n$} 
    \STATE \textbf{if} a rule of the form $\nonterm \rightarrow \epsilon$ does not exist \textbf{then} reject
    \STATE {\bf else} $\nonterm \leftarrow \epsilon$
    \ELSE 
    \STATE Find the unique rule of the form below. Otherwise reject
     $$\nonterm \rightarrow w[i]Bv,~ v \in \Sigma^*,
    B \in N \cup \{ \epsilon \} $$ 
    \STATE $\compstack \leftarrow \compstack + \Compress(v^R,h,\alpha,p) \mod p$
\\ \COMMENT{where $v^R$ is $v$ reversed}
    \STATE $\nonterm \leftarrow B~; ~h \leftarrow h+ |v|$ 
    \ENDIF
  \ELSE
    \STATE $\compstack \leftarrow \compstack - {\bf w[i]}\alpha^{h-1} \mod p$
    \STATE $h \leftarrow h - 1$
  \ENDIF
\ENDFOR
\STATE \textbf{if} $\compstack=0 \text{ and } h = 0$ \textbf{then} accept
\STATE \textbf{else} reject
\end{algorithmic}
\end{algorithm}

Algorithm \ref{alg1} uses $\lceil \log p \rceil$ bits to store $\alpha$,
$\lceil \log p \rceil$ for $\compstack$, $2 \lceil \log n \rceil$ for $i,h$ and
some constant space that depends on the grammar for $\nonterm$.  Hence the space complexity is $2\lceil \log p \rceil + 2\lceil \log n \rceil + c$.  
%and can be put to use in practical applications.\overline{v}
 It also uses $\lceil \log p \rceil$ random bits. 
 
\subsection{Proof of Correctness}

\begin{lemma}
\label{lem:comp-sim}
If the input is not rejected on or before the $i^{\text{th}}$ iteration of for
loop on line 4 of algorithm \ref{alg1} then
\begin{itemize}
\item $h=|\Stack(w,i)|$
\item $\compstack = \CompStack(w,i,\alpha)$ 
\item $\nonterm = \NonTerm(w,i)$.
\end{itemize}
\end{lemma}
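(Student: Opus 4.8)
The statement is an invariant-preservation claim about Algorithm~\ref{alg1}, so the plan is to prove it by induction on the loop index $i$, showing that the three equalities hold whenever the $\CPDA$ $M_L$ has not yet gotten stuck (equivalently, whenever $\Stack(w,i)$ is defined) and the algorithm has not rejected. For the base case $i=1$, before the first execution of the loop body the initialization sets $\compstack = 0$, $\nonterm = S$, $h=0$. Since $M_L$ begins with only $S$ on its stack, upon encountering $w[1]$ it has no terminals on the stack, so $\Stack(w,1)=\epsilon$ gives $|\Stack(w,1)|=0=h$, and $\NonTerm(w,1)=S=\nonterm$, while $\CompStack(w,1,\alpha)=\FP(\epsilon,0,\alpha,p)=0=\compstack$ as an empty sum. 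Thus all three invariants hold initially.

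For the inductive step I would assume the equalities on entering iteration $i$ and split into the three cases matching the branches of the algorithm, in each case describing the corresponding move of $M_L$ and checking that the updates restore the invariant. In the case $\nonterm=\epsilon$ (all terminals on the stack), $M_L$'s only legal move pops the top terminal, which succeeds exactly when $w[i]$ equals it; by the bottom-to-top indexing of $\Stack(w,i)$, that top terminal carries the leading monomial $\mathbf{w[i]}\,\alpha^{h-1}$ of $\CompStack$, so the updates $\compstack \leftarrow \compstack - \mathbf{w[i]}\,\alpha^{h-1}$ and $h\leftarrow h-1$ remove precisely that monomial, and $\nonterm$ correctly stays $\epsilon$. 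In the case $\nonterm\neq\epsilon$ with $h+i-1\neq n$, the $\DLCFG$ determinism condition guarantees at most one production $\nonterm\rightarrow w[i]Bv$ consistent with the next input symbol, and $M_L$ replaces the top nonterminal by $Bv$, with $B$ becoming the new top above the freshly pushed block $v$, which in turn sits above the previously present terminals.

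The crux of this last case is the exponent bookkeeping needed to verify $\CompStack(w,i+1,\alpha)=\CompStack(w,i,\alpha)+\FP(v^R,h,\alpha,p)$. I would first fix the convention by which $M_L$ pushes the right-hand side of a production, and then observe that the deepest symbol of $v$ lands immediately above the old terminal block and so receives exponent $\alpha^{h}$, while the first symbol of $v$ receives $\alpha^{h+|v|-1}$; matching this against the definition of $\FP$ shows that it is $v^R$, not $v$, that must be added, and that $h\leftarrow h+|v|$ and $\nonterm\leftarrow B$ are the right updates. The remaining case, $\nonterm\neq\epsilon$ with $h+i-1=n$, uses Observation~\ref{obs:stack}: there is exactly one nonterminal, and $h+(i-1)=n$ is precisely the moment when the number of unread input symbols equals the number of terminals already on the stack, forcing that nonterminal to derive $\epsilon$; $M_L$ then performs its unique $\epsilon$-move iff $\nonterm\rightarrow\epsilon$ exists (otherwise both $M_L$ and the algorithm reject). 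Since an $\epsilon$-move leaves the terminal stack untouched, $h$ and $\compstack$ are unchanged and only $\nonterm\leftarrow\epsilon$ is updated.

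The hard part will be twofold. The first difficulty is getting the direction of the polynomial right in the push case: an error in the pushing convention or in the indexing of $\FP$ would silently flip $v$ and $v^R$, so I would verify the exponent alignment explicitly on both the old and the newly pushed terminals. The second, more delicate, difficulty is synchronizing the loop index $i$ with the input position of $M_L$ across the $\epsilon$-move, which consumes no input; I would argue that the timing condition $h+i-1=n$ fires the (unique) $\epsilon$-move at exactly the right iteration and that the subsequent terminal-matching iterations account for all remaining symbols, so that the invariant is read against the $\CPDA$ configuration obtained after all forced $\epsilon$-moves have been applied. Finally, I would be careful to state the invariant only for runs on which $\Stack(w,i)$ remains defined, noting that the algorithm's explicit rejections coincide with $M_L$ getting stuck in the two nonterminal cases, whereas a terminal mismatch in the case $\nonterm=\epsilon$ is \emph{not} caught during the loop and is detected only through the final fingerprint test $\compstack=0$.
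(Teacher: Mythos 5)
Your proof is correct and takes essentially the same route as the paper's: induction on the loop index $i$, with a case analysis matching the three branches of Algorithm~\ref{alg1} and the exponent bookkeeping showing $\compstack$ tracks $\CompStack(w,i,\alpha)$. Your treatment is in fact more careful than the paper's terse argument, in particular in flagging that the invariant must be read only while $\Stack(w,i)$ remains defined and that a terminal mismatch is not detected inside the loop.
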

\begin{proof}
The lemma is proved using induction on $i$.  At $i=1$, $h=0,~\compstack =
\CompStack(w,1,\alpha) = 0$ and $\nonterm = \NonTerm(w,1) = S$.  Assuming above
is true for the $i$th iteration of the loop.  After the updates in line 11 (or
15), we have that $\compstack = \CompStack(w,i,\alpha) + \sum_{j=1}^{|v|}
{\bf v^R[j]}\alpha^{h+j-1} \mod p = \CompStack(w,i+1,\alpha)$ (or $\compstack =
\CompStack(w,i,\alpha) - w[i]\alpha^{h-1} \mod p = \CompStack(w,i+1,\alpha)$,
respectively).  Similarly $h$ and $\nonterm$ are updated correctly in lines 8, 12 and 16.
\end{proof}

Applying Lemma \ref{lem:comp-sim} for $i=n$, we get the following corollary.
\begin{corollary}
If $w\in L$ then Algorithm \ref{alg1} accepts with probability 1.
\end{corollary}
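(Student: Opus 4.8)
The plan is to establish two things under the hypothesis $w\in L$: first, that the \textbf{for} loop of Algorithm~\ref{alg1} runs to completion without entering either reject branch (lines~7 and~10); and second, that at the end the two quantities tested on line~19, namely $\compstack$ and $h$, are both $0$ \emph{regardless of the random choice of} $\alpha$. The second point is what upgrades the guarantee from ``accepts with high probability'' to ``accepts with probability~$1$,'' so it is the part I would emphasize; the main obstacle lies in the first point.

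First I would argue that no reject occurs. Since $w\in L$, the \CPDA\ $M_L$ has the particular accepting computation described in Section~\ref{dlin-comp}: the run in which every transition $\delta(q,\epsilon,A)=(q,\epsilon)$ is taken exactly at the step $i$ where $i-1$ plus the number of terminals currently on the stack equals $|w|=n$. By Observation~\ref{obs:stack} the stack carries at most one nonterminal at every step, so the variable $\nonterm$ holds that single nonterminal (if any) while $h=|\Stack(w,i)|$ counts the terminals beneath it; these are precisely the invariants of Lemma~\ref{lem:comp-sim}. Following this computation step by step, I would show that whenever control reaches line~7 the triggering condition $h+i-1=n$ is exactly the moment at which the current nonterminal must be rewritten by an $\epsilon$-rule for the derivation of $w$ to terminate, so such a rule exists and line~7 does not reject; and whenever control reaches line~10 the required production $\nonterm\rightarrow w[i]Bv$ exists because $w\in L$ admits a leftmost derivation that proceeds at this step, and is unique by the determinism condition of the \DLCFG\ (for any two productions $A\rightarrow a\omega$ and $B\rightarrow b\omega'$ with $A=B$ one has $a\neq b$). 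Hence the loop terminates normally and Lemma~\ref{lem:comp-sim} applies at every iteration.

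Applying Lemma~\ref{lem:comp-sim} with $i=n$ then gives $h=|\Stack(w,n)|$ and $\compstack=\CompStack(w,n,\alpha)$. Because $w\in L$, the accepting run empties the stack, so $|\Stack(w,n)|=0$ and thus $h=0$. For $\compstack$ I would invoke the observation that $\CompStack(w,n,x)$ is the zero polynomial if and only if $w\in L$: since $w\in L$, this polynomial is identically zero, and therefore its evaluation $\CompStack(w,n,\alpha)=0$ for \emph{every} $\alpha\in\field_p$, not merely for most of them. Consequently $\compstack=0$ whatever value $\alpha$ happened to take.

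Combining the two, the concluding test $\compstack=0$ and $h=0$ on line~19 succeeds for every choice of $\alpha$, so the algorithm accepts on all random seeds, i.e.\ with probability~$1$. The only delicate step is the first one---verifying that the reject branches are never entered---and within it the correct bookkeeping of the $\epsilon$-rule timing through the condition $h+i-1=n$. Once the run is known to be faithfully simulated, the probability-$1$ conclusion is immediate, since here we evaluate a polynomial that is identically zero, so the Schwartz--Zippel slack needed for the $w\notin L$ direction plays no role.
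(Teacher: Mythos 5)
Your proposal is correct and follows essentially the same route as the paper: the paper's proof consists of exactly the application of Lemma~\ref{lem:comp-sim} at $i=n$ together with the observation that $\CompStack(w,n,x)$ is the zero polynomial when $w\in L$, so its evaluation vanishes for every $\alpha$ and $h=0$. You additionally spell out why the reject branches on lines~7 and~10 are never taken, which the paper leaves implicit; this is a welcome elaboration rather than a different argument.
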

\begin{lemma}
If $w\notin L$ then  $\Pr[ \text{Algorithm \ref{alg1} accepts} ] \leq n/p$.
\end{lemma}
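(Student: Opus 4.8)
The plan is to bound the probability that Algorithm~\ref{alg1} accepts when $w \notin L$. The key observation is that the algorithm accepts if and only if, at the end of the loop, $\compstack = 0$ and $h = 0$. By Lemma~\ref{lem:comp-sim}, provided the input is not rejected before reaching iteration $n$, we have the invariant $h = |\Stack(w,n)|$, $\nonterm = \NonTerm(w,n)$, and $\compstack = \CompStack(w,n,\alpha)$. So the entire analysis reduces to understanding what the condition ``$\compstack = 0$ and $h = 0$'' says about $w$ when these invariants hold.

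First I would observe that whenever $w \notin L$, the $\CPDA$ run either rejects before reaching position $n$ (in which case the algorithm rejects deterministically and contributes nothing to the acceptance probability), or it reaches position $n$ with a stack that is \emph{not} empty. In the latter case, by the earlier observation the polynomial $\CompStack(w,n,x)$ is nonzero precisely because $w \notin L$ forces a nonempty residual stack. There are two ways the final stack can be nonempty: either $\Stack(w,n)$ is a nonempty sequence of terminals, so that $\CompStack(w,n,x)$ is a genuinely nonzero polynomial of degree at most $n$; or the remaining content is a single nonterminal (so $h = 0$ but $\nonterm \neq \epsilon$), which the algorithm catches separately. I would argue that in the first case the acceptance test $\compstack = 0$ evaluates the nonzero polynomial $\CompStack(w,n,x)$ at the random point $\alpha$, and by the Schwartz--Zippel lemma (already invoked in Section~\ref{dlin-comp}) this evaluation equals $0$ with probability at most $\deg/p \le n/p$.

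The main point requiring care is to confirm that the two explicit acceptance conditions together are equivalent to ``$w$ produced an empty stack,'' and that the only source of error is a false zero of the fingerprint polynomial. Concretely: if $h \neq 0$ the algorithm rejects regardless of $\alpha$, so no error arises from a leftover terminal count; and if $h = 0$ but the true stack was nonempty (which for $w \notin L$ reaching position $n$ with $h=0$ would mean a residual nonterminal), the invariant $\nonterm = \NonTerm(w,n) \neq \epsilon$ is handled by the rejection on line~7. Hence the only way a rejected string can slip through is when $\Stack(w,n)$ is a nonempty terminal string, $h = |\Stack(w,n)| > 0$ --- but then $h \neq 0$ forces rejection too. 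I therefore expect the key step, and the main obstacle, to be pinning down precisely which failure mode of $w \notin L$ leaves $h = 0$ yet $\compstack$ possibly zero, so that the Schwartz--Zippel bound is the sole contributor to the error; once that case analysis is clean, the bound $\Pr[\text{accept}] \le n/p$ follows immediately from the degree-$n$ polynomial evaluation argument.
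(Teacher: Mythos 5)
There is a genuine gap: you assert that whenever the \CPDA\ rejects before reaching position $n$, Algorithm~\ref{alg1} also rejects deterministically, and you consequently reduce the whole error analysis to the single case where the run survives to the end with a nonempty stack. This is false, and it causes you to miss exactly the case where the randomness is needed. Algorithm~\ref{alg1} does not store the stack contents; when $\nonterm=\epsilon$ it never rejects inside the loop, it merely subtracts ${\bf w[i]}\alpha^{h-1}$ and decrements $h$. So if the \CPDA\ rejects at some step $j$ because the terminal on top of its stack is some $a\neq w[j]$, the streaming algorithm cannot detect this: it blindly subtracts the wrong monomial, keeps going, and may well finish with $h=0$. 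The paper's proof is organized around precisely this case: the coefficient of $x^{h-1}$ in the polynomial underlying $\compstack$ becomes ${\bf a}-{\bf w[j]}\neq 0$, no later update touches that degree, hence $\compstack$ is a random evaluation of a nonzero polynomial of degree at most $n$ and vanishes with probability at most $n/p$ by Schwartz--Zippel. Your write-up never exhibits a nonzero polynomial in this situation; indeed your case analysis, taken literally, would imply the algorithm errs with probability $0$.

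A second, related problem: you invoke Lemma~\ref{lem:comp-sim} at $i=n$ to identify $\compstack$ with $\CompStack(w,n,\alpha)$, but $\Stack(w,i)$ (and hence $\CompStack$) is undefined once the \CPDA\ has rejected, so the invariant gives you nothing in the mismatch case. The correct structure, as in the paper, is a case split on the first reason the \CPDA\ rejects: (i) no applicable production --- the algorithm rejects with probability $1$; (ii) terminal mismatch --- the fingerprint polynomial acquires a nonzero coefficient and Schwartz--Zippel gives error at most $n/p$; (iii) nonempty stack at the end of the input --- the residual terminals again make the polynomial nonzero. Finally, your claim that a leftover nonterminal with $h=0$ is ``handled by the rejection on line~7'' is unsupported: that line is only reached during the loop when $h+i-1=n$, and the final test checks only $\compstack$ and $h$, not $\nonterm$.
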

\begin{proof}
If $w\notin L$ then \CPDA\ rejects, say at step $j$. There are three cases.
\begin{enumerate}
\item $\NonTerm(w,j)$ was defined and it rejected as a matching rule of the 
 form $\NonTerm(w,j) \rightarrow w[j]\omega,~ \omega \in 
 \Sigma^*(N \cup \{ \epsilon \})\Sigma^*$ could not be found.
 \item $\NonTerm(w,j)$ was not defined and it rejected as the last character of
  $\Stack(w,j)$ was not $w[j]$.
  \item the stack was not empty at the end of the string.
\end{enumerate}

In the case $1$, Algorithm \ref{alg1} rejects with probability 1.  For case $2$,
the monomial subtracted by the algorithm is ${\bf w[j]}\alpha^{h-1}$.  The only
other monomial in the sum with the degree $h-1$ is ${\bf a}\alpha^{h-1}$ where $a$
is the last character of $\Stack(w,j)$.  Also after the $j$th step no monomial
of degree $h$ is subtracted.  So the polynomial for which $\compstack$ is an
evaluation is not the zero polynomial.  This is also true in case $3$, as the
stack is not empty.  The lemma follows, by an application of the Schwartz Zippel
Lemma.
\end{proof}
Theorem \ref{thm:dlin-upperbound} is obtained by finding a prime $p$ between $n^2$ and $2n^2$ by brute force search and then using Algorithm \ref{alg1}.

\subsection{A deterministic multi-pass algorithm}
\label{multi-pass}
In this section we give a deterministic multi-pass algorithm for the membership
testing of any language in \DLIN . This is done by first reducing the membership testing problem for any $L \in \DLIN $ to membership testing of a particular language \Dyckk\ $\in \DLIN$. Recall that \Dyckk\ is the language generated by the grammar 
$$S \rightarrow SS \mid (_1 S )_1 \mid (_2 S )_2 \mid \cdots \mid (_k S )_k \mid \epsilon $$
and \otDyckt\ is generated by
$$S \rightarrow ( S ) \mid [ S ] \mid \epsilon .$$  
We will be using the following definition of streaming reduction:
\begin{definition}[Streaming Reduction]
Fix two alphabets $\Sigma_1$ and $\Sigma_2$.  A problem $P_1$ is
$f(n)$-streaming reducible to a problem $P_2$ in space $s(n)$ if for every
input $x \in \Sigma_1^n$, there exists $y_1 y_2\ldots y_n$ with $$y_i \in
\cup_{i=1}^{f(n)} \Sigma_2^{i} \cup \{\epsilon\}$$ such that:
\begin{itemize}
\item $y_i$ can be computed from $x_i$ using space $s(n)$.
\item From a solution of $P_2$ on input $y$, a solution on $ P_1$ on input $x$
can be computed in space $s(n)$.
\end{itemize}
\end{definition}
Note that our definition is a slight modification of the definition from
\cite{MMN09}\footnote{In \cite{MMN09}, $y_i$ s are assumed to be of fixed
length, i.e. from $\Sigma_2^{f(n)}$}.  In \cite{MMN09}, it was observed that the
membership testing of \Dyckk\ $O(\log k)$-streaming reduces in $O(\log k)$ space
to membership testing of \Dyckt.  We show that the membership testing for any
language in \DLIN\ $O(1)$-streaming reduces in $O(\log n)$ space to membership testing in \otDyckk,
where $k$ is the alphabet size of the language. It is easy to see that in the the reduction of Magniez et al.~\cite{MMN09}, the output of the reduction is in \otDyckt\ if and only if the input is in \otDyckk . Hence we have the following
theorem:
\begin{theorem} 
\label{thm:str-reduce}
The membership testing for any language in \DLIN\ $O(\log |\Sigma|)$-streaming
reduces in $O(\log n)$ space to membership testing in \otDyckt , where $\Sigma$ is 
the alphabet of the language.
\end{theorem}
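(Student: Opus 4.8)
The plan is to prove Theorem~\ref{thm:str-reduce} by composing two streaming reductions: first I would reduce membership in an arbitrary $L\in\DLIN$ to membership in \otDyckk\ with $k=|\Sigma|$, and then apply the reduction of \cite{MMN09} from $\Dyckk$ to \Dyckt, restricted to the $1$-turn languages, to pass from \otDyckk\ to \otDyckt. The only genuinely new step is the first reduction; the second is the observation, already quoted before the theorem, that the encoding of \cite{MMN09} of each of the $k$ bracket types by a fixed-length string over two bracket types maps a $1$-turn word (all opening symbols followed by all closing symbols) to a $1$-turn word, so it yields an $O(\log k)$-streaming reduction in $O(\log k)$ space from \otDyckk\ to \otDyckt.

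For the first reduction I would run the \CPDA\ of $L$ as a streaming transducer, maintaining exactly the state used in Algorithm~\ref{alg1} without the fingerprint: the current nonterminal $\NonTerm$ (constant space, since the grammar is fixed), the stack height $h$, and the position $i$ (each $O(\log n)$). By Observation~\ref{obs:stack} the run consists of a \emph{push phase} ($\NonTerm\neq\epsilon$) followed by a \emph{pop phase} ($\NonTerm=\epsilon$), with the turn located, as in Algorithm~\ref{alg1}, when $h+(i-1)=n$. I would map the two phases to the two halves of a $1$-turn word over an alphabet of $|\Sigma|$ (plus one spare) bracket types, with one opening and one closing type per terminal of $\Sigma$. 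In the push phase, when the unique rule $\NonTerm\to w[i]\,B\,v$ fires, I emit opening brackets for the symbols of $v$ \emph{in reverse order} $v[|v|],\dots,v[1]$ and update $\NonTerm\leftarrow B$, $h\leftarrow h+|v|$; this reversal is arranged so that the opening brackets emitted so far, read in emission order, spell $\Stack(w,i)$ from the bottom to the top, hence the topmost stack terminal is always the most recently opened bracket. At the turn I apply the $\epsilon$-rule (rejecting, i.e.\ emitting the spare unmatchable closing bracket, if $\NonTerm$ has no such rule) and switch to the pop phase. In the pop phase, on reading $w[i]$ I emit the single closing bracket labelled $w[i]$ and set $h\leftarrow h-1$. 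If at any step no matching push rule exists I emit the spare closing bracket and output $\epsilon$ thereafter.

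Correctness follows from the structure of the construction. Since the push phase entirely precedes the pop phase, the output $y=y_1\cdots y_n$ is a candidate $1$-turn word, and by the LIFO discipline of \otDyckk\ its $j$-th closing bracket is required to match its $j$-th-from-last opening bracket; by the previous paragraph this opening bracket carries the label of the stack terminal that the \CPDA\ compares against $w[i]$ during the $j$-th pop. Therefore $y\in\otDyckk$ iff every popped input symbol equals the stack terminal it is matched with and the stack empties exactly (equivalently $h=0$ at the end), which by the acceptance condition of the \CPDA\ is exactly $w\in L$; the three rejection cases of the correctness lemma for Algorithm~\ref{alg1} correspond respectively to the spare unmatchable bracket, a type mismatch between a closing and its opening bracket, and leftover opening brackets. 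The number of brackets emitted per input symbol is at most the maximum length of a production right-hand side, a constant of the grammar, so this is an $O(1)$-streaming reduction, the working memory is $O(\log n)$, and recovering the answer for $L$ from the answer for \otDyckk\ is the identity map.

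Composing the two reductions multiplies the per-symbol expansion factors ($O(1)\cdot O(\log k)=O(\log|\Sigma|)$) and adds the space ($O(\log n)+O(\log|\Sigma|)=O(\log n)$), which gives Theorem~\ref{thm:str-reduce}. I expect the main obstacle to lie entirely in the first reduction: pinning down the labelling and the reversal so that the LIFO matching of \otDyckk\ coincides \emph{exactly} with the terminal-against-terminal comparisons the \CPDA\ performs in its pop phase, and discharging the $\epsilon$-rule at the turn together with all rejection behaviour within the streaming, no-lookahead constraint. Verifying that composition of streaming reductions is multiplicative in the expansion parameter and additive in space is routine but should be stated explicitly.
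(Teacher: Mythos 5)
Your proposal is correct and follows essentially the same route as the paper: the paper's Algorithm~\ref{alg2} is exactly your first reduction (the \CPDA\ simulation of Algorithm~\ref{alg1} with the fingerprint replaced by emitting $v^R$ as opening brackets in the push phase and $\overline{w[i]}$ in the pop phase), composed with the observation that the Magniez et al.\ reduction from \Dyckk\ to \Dyckt\ preserves membership in the $1$-turn fragment. Your extra care about the LIFO matching and the spare unmatchable bracket for early rejection are minor elaborations of details the paper leaves implicit.
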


Say $L$ is a fixed \DLIN, with $\Sigma = \{a_1, a_2, \ldots, a_k\}$.  Given an 
input $w$, the streaming reduction outputs a string $w' \in \Sigma \cup 
\overline{\Sigma}$ so that $w'$ is in \otDyckk\ 
if and only if $w$ belongs to $L$. Here $\overline{\Sigma} = \{\overline{a_1},
\overline{a_2},\ldots,\overline{a_k} \}$ and for each $i \in [k]$ $(a_i, 
\overline{a_i})$ is a matching pair. The streaming reduction is obtained by 
making a change to the steps $11$ and $15$ of Algorithm \ref{alg1} and is given
as Algorithm \ref{alg2}.

\begin{algorithm}
\caption{Streaming reduction from $L \in \DLIN$ to \Dyckk} 
\label{alg2}
\begin{algorithmic}[1]
\STATE Input : $w \in \Sigma^*$. Let $|w|=n$.
\STATE Output : $w' \in \Sigma \cup \overline{\Sigma}$  
\STATE $\nonterm \leftarrow S;~w' \leftarrow \epsilon$ 
\STATE $i \leftarrow 1$
\WHILE{$i\leq n$}
  \IF {$\nonterm \neq \epsilon$}
    \IF{$|w'|+i-1 = n$} 
      \STATE \textbf{if}  a rule of the form $\nonterm \rightarrow \epsilon$ does not exist \textbf{then} reject
      \STATE \textbf{else} $\nonterm \leftarrow \epsilon$
    \ELSE
    \STATE Find the unique rule of the form below. Otherwise reject
     $$\nonterm \rightarrow w[i]Bv,~ v \in \Sigma^*,
    B \in N \cup \{ \epsilon \} $$ 
    \STATE $w'\leftarrow w' \cdot v^R$ 
    \STATE $\nonterm \leftarrow B; ~i \leftarrow i + 1$
    \ENDIF
  \ELSE
    \STATE $w'= w' \cdot \overline{w[i]}$;~ $i \leftarrow i+1$
  \ENDIF
\ENDWHILE
\end{algorithmic}
\end{algorithm}

From Theorem \ref{thm:str-reduce}, we know that any language in \DLIN\ $O(\log
|\Sigma|)$-streaming reduces to \otDyckt.  Thus it suffices to give a
$p$-passes, $O(n/p)$-space deterministic algorithm for membership testing of
\otDyckt.

The algorithm divides the string into blocks of length $n/2p$.  Let the blocks
be called $B_0,$ $B_1,$ $\ldots,$ $B_{2p-1}$ from left to right. (i.e. $B_i =
w[i(n/2p)+1]~w[i(n/2p)+2]$ $\ldots w[(i+1)n/2p]$.) The algorithm considers a
pair of blocks ($B_j$,$B_{2p-(j+1)}$) during the $j$th pass.  Using the stack
explicitly, the algorithm checks whether the string formed by the concatenation
of $B_j$ and $B_{2p-(j+1)}$ is balanced.  If it is balanced, it proceeds to the
next pair of blocks.  The number of passes required is $p$.  Each pass uses
$O(n/p)$ space and the algorithm is deterministic.  Later in Section \ref{sec:lowerbounds} we show that this algorithm is optimal.

\section{Membership Testing of \LLk\ languages}
\label{sec:llk}
In this section we give a randomized streaming algorithm for testing membership
in \LLk\ languages.
%FIRST_k(\alpha) = string of size k or lesss w\beta or 
%SELECT(A->alpha) = {w| u,v \in NuS* st, S=>uav -> u\alpha v ,w \in First_k(\alpha v)}
Let $G=(N,\Sigma,P,S)$ be a fixed grammar. For a string $w \in \Sigma^*$, let
$$\text{pref}_k(w) = \begin{cases} \text{ if } |w| > k \text{ then the first } k \text{ characters of } w  \\
                            \text{ else  } w 
\end{cases}.$$
The \emph{select set} of a production $A \rightarrow \alpha$, where $A \in N$ and $\alpha \in (N \cup \Sigma)^*$ 
 is $$\select(A 
\rightarrow \alpha) = \{u \mid \exists v,w \in \Sigma^*,  \alpha  v
\text{ derives } w \text{ and } \text{pref}_k(w) = u\}.$$
%TODO: Verify if above definition is correct and cite. 
%TODO: We havent defined these symbols for derivations. Something has to be done

\begin{definition}[Lewis et al.~\cite{LS68}]
A grammar $G=(N,\Sigma,P,S)$ is \LLk\ if for any two distinct productions of the form $A \rightarrow \alpha,~A \rightarrow \beta$, 
the select sets are disjoint. \LLk\ languages are the class of languages generated by \LLk\ grammars.
\end{definition}

From now on, we describe an algorithm for \LLone\ languages.  It is easy to observe that it
generalises for \LLk\ languages.  Let $L$ be a language generated by an \LLone\
grammar $G$.  It is known that for any two distinct rules $R\neq R'$ in the
production set of $G$ with the same left side, $\select(R)$ and $\select(R')$
are disjoint.  We call this the \LLone\ property.  Note that \DLCFG s with no
epsilon rules have this property.  Therefore, languages generated by \DLCFG s
with no epsilon rules, are a subclass of \LLone.  As noted by Kurki-Suonio~\cite{KS69}, they
are in fact a proper subclass of languages generated by \LLone\ grammars with no
epsilon rules.  As the part of the preprocessing, for every rule $R$ of the
grammar we compute the set $\select(R)$.  This requires only $O(1)$ space as the
grammar is fixed.

Our membership testing algorithm for \DLIN\ uses the \LLone\ property
non trivially.  Algorithm \ref{alg1} can be thought of as working in two main
steps.  The first step involves reading a terminal from the input and deciding
the next rule to be applied.  The second step consists of updating the stack
appropriately.  The \LLone\ property enables the \CPDA\ to deterministically
decide the next rule to be applied having seen the next input terminal.
Therefore, the first step will remain unchanged even in the case of membership
testing of \LLone\ languages.  In what follows we describe the second step.

Let $\Gamma_k \gamma_{k} \ldots \Gamma_0 \gamma_0,~\gamma_i \in
\Sigma^* \text{ and } \Gamma_i \in N$ be any sentential form arising in the
derivation of $w \in L$.  Then the corresponding \CPDA\ will store this in the
stack(in the above order from top to bottom). 
It is easy to see that the \CPDA\ is generating the left most derivation of $w$.  The space efficient algorithm
that we give below compresses the strings $\gamma_i$s as before and stores
$\Gamma_i$, compression of $\gamma_i$ and $|\gamma_i|$ as a tuple on the stack.
For a string $w \in L$, %let $rank(w)$ denote the maximum number of nonterminals
%in any sentential form arising in the (unique) leftmost derivation generating
%$w$.  Then 
the algorithm runs in space $O(rank(w) (\log p + \log n))$, where $p$ is the size of
the field over which the polynomial is evaluated.  
%Let the rank of the grammar,
%$rank(G) = \text{sup}_{w \in L(G)} rank(w)$, if it exists.  A language is said
%to be left derivation bounded if it is generated by a grammar for 
%which there exists an $r \in \nat$ such that $rank(G)= r$.

%The theorem statement may move into the intro. But just creating the statement
%to be pasted later.
\subsection{Streaming algorithm for testing membership in \LLone\ languages}
Given below is the randomized streaming algorithm for testing
 membership in \LLone\ languages.

\begin{algorithm}
\caption{Randomized one pass algorithm} 
\label{alg3}
\begin{algorithmic}[1]
\STATE Input : $w \in \Sigma^*$. Let $|w|=n$. 
\STATE Pick $\alpha$ uniformly at random from $\mathbb{F}_p$. 
\STATE $\comppart \leftarrow 0 ~; ~ \nonterm \leftarrow S ~; ~ h \leftarrow 0 $ 
\STATE $\compstack.\text{push}(\comppart, \nonterm, h)$
\STATE $i \leftarrow 1$
\WHILE{$i \leq n$ and $\compstack$ not empty} 
  \STATE $(\comppart, \nonterm, h) \leftarrow \compstack.\text{pop}() $ 
  \IF {$\nonterm \neq \epsilon$}
    \STATE Find the unique rule $R$ of the form below such that $w[i] \in \select(R)$. Otherwise reject.
     $$\nonterm \longrightarrow B_t \beta_t B_{t-1} \beta_{t-1}\ldots B_0
\beta_{0} \text{ where all } \beta_i \in \Sigma^*,
    \text{ and } B_i \in N \cup \{ \epsilon \} $$
    \STATE $\compstack.\text{push}(\comppart +\Compress(\beta_0^R,h,\alpha,p),B_0 , h +
|\beta_0|)$
    \FOR{$k\leftarrow 1$ to $t$}
    \STATE $\compstack.\text{push}(\Compress(\beta_k^R,0,\alpha,p),B_k,|\beta_k|)$ \textbackslash\textbackslash ~$\beta_i^R=reverse(\beta_i)$ 
    \ENDFOR
  \ELSE
    \IF{$h\neq 0$} 
      \STATE $ \comppart \leftarrow \comppart -{\bf w[i]}\alpha^{h-1}  \mod p ~;~h \leftarrow h - 1$
      \STATE $\compstack.\text{push}(\comppart, \nonterm, h)$
    \ELSIF{$\comppart \neq 0$}
      \STATE reject
    \ENDIF
    \STATE $i \leftarrow i+1$
  \ENDIF
\ENDWHILE 
\STATE \textbf{if} \compstack\ is not empty \textbf{then} reject \textbf{else} accept
\end{algorithmic}
\end{algorithm}
The algorithm uses $\lceil \log p \rceil$ space to store $\alpha$,
$\lceil \log p \rceil + O(1) + \log n$ space to store a tuple on the
stack. On input $w$, the space used by the 
algorithm is at most $rank_G(w)(\log n + \lceil \log p \rceil + O(1))$. Therefore, for 
a language generated by grammar $G$, the space used by the
algorithm for checking $w \in L$ is at most $O(rank_G(n)(\log n + \log p))$. 
For proving Theorem \ref{thm:llk}, Algorithm \ref{alg3} can be modified to 
to take an additional parameter $b$ as an input and reject when $w \notin L$ or the number of items in the stack exceeds $b$. Also $p$ can be set to a prime between $n^2$ and $2n^2$ which can be found by brute force search, so that error
probability $n/p \leq 1/n$. 

\subsection{Correctness of the algorithm}
\label{subsec:corr-algo2}
In this section we prove the correctness of the algorithm.  Note that, given an
\LLone\ grammar, the simulating \CPDA\ performs a top-down parsing of the
grammar.  On reading a symbol from the input, and the top of the stack, it
deterministically picks a rule to be applied next.  If no such rule exists, it
halts and rejects.  If such a rule is found, it pushes the right hand side into
the stack.  As long as the stack-top is a nonterminal it repeats this process.
If the stack-top is a terminal, it pops the top terminal from the stack,
provided it matches with the next input letter.  Suppose there is a mismatch, it
halts and rejects.  If after processing the whole string the stack is empty, it
accepts.

We now prove that the working of the algorithm has a close correspondence with
the working of the \CPDA.

%can we make the following claim less verbose?
%TODO: define compress(a).
\begin{lemma}
\label{lem:inductive}
Let $\Stack(t)=\Gamma_k \gamma_{k} \ldots \Gamma_0 \gamma_0,~ \Gamma_i \in N,
\gamma_i \in \Sigma^*$ be the contents of the stack of \CPDA\ before the 
$t^{\text{th}}$ step(counted in terms of application of the transition function) and
$$\stack(t)=[(\comppart_j,\nonterm_j,h_j),\ldots,(\comppart_0,\nonterm_0,h_0)]$$
 be
the contents of the stack of Algorithm \ref{alg3} before the $t^{\text{th}}$
iteration of the while loop in line $6$. 
If the \CPDA\ has not rejected on or before step $t$ then $j=k$ and $\forall i \in
\{0,\cdots k\}$,
\begin{itemize}
\item $\comppart_i = \Compress(\gamma_i,0,\alpha,p)$ \footnote{\Compress\ was 
defined in Section \ref{dlin-comp}}
\item $\nonterm_i = \Gamma_i$
\item $h_i = |\gamma_i|$
\end{itemize}
\end{lemma}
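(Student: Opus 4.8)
The plan is to prove Lemma~\ref{lem:inductive} by induction on $t$, mirroring the structure of the correctness proof for the \DLIN\ case (Lemma~\ref{lem:comp-sim}), but now tracking an entire stack of tuples rather than a single $\compstack$ value. The invariant to maintain is the three-part correspondence: for each stack entry index $i$, the stored compression $\comppart_i$ equals $\Compress(\gamma_i,0,\alpha,p)$, the stored nonterminal $\nonterm_i$ equals $\Gamma_i$, and the stored length $h_i$ equals $|\gamma_i|$, together with the claim that the two stacks have the same height ($j=k$). I would read off the base case $t=1$ directly from the initialization on lines~3--5: the \CPDA\ starts with only $S$ on its stack, and the algorithm pushes the single tuple $(0, S, 0)$, which matches since $\gamma_0 = \epsilon$ gives $\Compress(\epsilon,0,\alpha,p)=0$, $\Gamma_0 = S$, and $|\gamma_0|=0$.

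\medskip

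For the inductive step I would assume the invariant holds before the $t^{\text{th}}$ iteration and examine the two branches of the \texttt{if} on line~8, matching each to the corresponding \CPDA\ transition. In the case $\nonterm \neq \epsilon$, the \LLone\ property (invoked earlier) guarantees that the rule $R$ found on line~9 is exactly the one the \CPDA\ applies, since $w[i]\in\select(R)$ determines it uniquely; here I must check that pushing the substrings $\beta_t,\ldots,\beta_0$ reconstitutes the correct terminal blocks and nonterminals of the new sentential form, and in particular that the bottom block gets its offset right. The subtle point is the $h$ argument to \Compress: the bottom pushed tuple uses $\Compress(\beta_0^R,h,\alpha,p)$ added to the inherited $\comppart$ with the \emph{old} $h$, whereas the higher blocks use offset $0$; I would verify that this bookkeeping exactly produces $\Compress(\gamma_i,0,\alpha,p)$ for each new $\gamma_i$, using the fact that $\beta_0$ is concatenated onto the existing bottom terminal string while the $\beta_k$ for $k\geq 1$ start fresh blocks. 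In the case $\nonterm = \epsilon$, the algorithm matches and pops a terminal exactly as the \CPDA\ does: when $h\neq 0$ it subtracts ${\bf w[i]}\alpha^{h-1}$ and decrements $h$, which corresponds to removing the top terminal of $\gamma_i$, and I would confirm the updated $\comppart$ still encodes the shortened string.

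\medskip

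**The main obstacle** I anticipate is getting the offset arithmetic in the compression exactly right across a push that simultaneously extends the inherited bottom block and introduces several new blocks. Unlike the single-nonterminal \DLIN\ case, here the sentential form carries multiple $(\Gamma_i,\gamma_i)$ pairs, and I must be careful that the reversal $\beta_i^R$ together with the offset $h$ (versus $0$) places each terminal at the correct power of $\alpha$ so that the stored value genuinely equals $\Compress(\gamma_i,0,\alpha,p)$ with the per-block normalization to offset $0$. The key algebraic identity I would rely on is that $\Compress$ is additive over concatenation with an appropriate shift of the exponent offset, namely $\Compress(uv,h,\alpha,p) = \Compress(u,h,\alpha,p) + \Compress(v,h+|u|,\alpha,p)$, which lets the inherited bottom-block value combine cleanly with the freshly compressed $\beta_0^R$. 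Once this identity is established and applied branch-by-branch, the remaining verifications for $\nonterm_i$ and $h_i$ are routine symbol-tracking, and the lemma follows by induction.
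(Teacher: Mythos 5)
Your proposal is correct and follows essentially the same route as the paper's own proof: induction on $t$ with the base case read off the initialization and a case split on whether the popped tuple carries a nonterminal, matching each branch to the corresponding \CPDA\ move. The only difference is one of detail --- the paper dismisses the push case with ``steps 9,10,12,16 make sure that updates are made correctly,'' whereas you make the offset bookkeeping explicit via the identity $\Compress(uv,h,\alpha,p)=\Compress(u,h,\alpha,p)+\Compress(v,h+|u|,\alpha,p)$, which is exactly the right tool and arguably a welcome clarification.
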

\begin{proof}
The lemma can be proved by induction on $t$.  At $t=1$,
$\Stack(1)=S,~\stack(1)=[(0,S,0)]$(due to the initialisation steps $3,4$) and
the lemma is true.  Suppose it is true at step $t$, we will prove that the lemma
holds at $t+1$st step.  We consider various cases.  Assume that $\Gamma_k \neq
\epsilon$ in $\Stack(t)$.  Therefore, by inductive hypothesis the stack-top
maintained by the algorithm has $\Gamma_k$ in its second component.  Then steps
$9,10,12,16$, makes sure that updates are made correctly.  Suppose
$\Gamma_k=\epsilon$ and $\gamma_k=av,~a\in \Sigma, v\in \Sigma^*$ then by
inductive hypothesis, the top most item of $\stack(t)$ is
$(\Compress(\gamma_k,0,\alpha,p),\epsilon,|\gamma_k|)$.  By definition
$\Compress(\gamma_k,0,\alpha,p) = a\alpha^{|\gamma_k|-1} +
\Compress(v,0,\alpha,p)$.  If $|v| >0$ then after the execution of step $17$,
this will become $(\Compress(v,0,\alpha,p),\epsilon,|v|)$ which is same as the
top most item of $\Stack(t+1)$.  On the other hand if $v=\epsilon$, this item
not push back in to the stack.
\end{proof}

\begin{lemma}
If $w \in L$ then the algorithm accepts with probability $1$. If $w \notin L$
then the probability that the algorithm accepts is bounded by $n/p$.
\end{lemma}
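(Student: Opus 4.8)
The plan is to deduce both claims from the faithful-simulation statement of Lemma~\ref{lem:inductive}, splitting the argument exactly as in the correctness analysis of Algorithm~\ref{alg1}. Lemma~\ref{lem:inductive} tells us that, as long as the \CPDA\ has not rejected, the stack maintained by Algorithm~\ref{alg3} is in bijection with the \CPDA\ stack: the $i$-th item carries the nonterminal $\Gamma_i$, the length $|\gamma_i|$, and the fingerprint $\Compress(\gamma_i,0,\alpha,p)$ of the corresponding terminal block $\gamma_i$. The point I would stress is that the nonterminal and length fields, and hence which rule is selected, when an item is popped, and whether the stack is empty when the input is exhausted, are all independent of the random evaluation point $\alpha$; only the \comppart\ fields depend on $\alpha$. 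Thus the control flow of Algorithm~\ref{alg3} coincides with the run of the \CPDA, and the only place the randomness can change the outcome is a \comppart\ comparison with $0$.

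For the first claim I would argue as follows. If $w\in L$ the \CPDA\ has an accepting run: it consumes all of $w$, empties its stack, and every terminal it pops matches the current input symbol. By Lemma~\ref{lem:inductive} the \comppart\ of the top item is always the true fingerprint $\Compress(\gamma,0,\alpha,p)$, so each subtraction of ${\bf w[i]}\alpha^{h-1}$ cancels exactly the monomial of degree $h-1$ as an identity over $\F_p$, valid for \emph{every} $\alpha$. Hence \comppart\ equals the zero element of $\F_p$ whenever a block is exhausted, the rejection test on \comppart\ never fires, the stack empties in lock-step with the \CPDA, and the algorithm accepts. Since no inequality over $\F_p$ was ever needed, this holds for all $\alpha$ and the acceptance probability is $1$.

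For the second claim, suppose $w\notin L$, so the \CPDA\ rejects at some first step $j$, and I would treat the same three cases as for $\DLIN$. If no production $R$ with $w[j]\in\select(R)$ exists, Algorithm~\ref{alg3} reaches the corresponding reject branch; this decision is $\alpha$-independent, so it rejects with probability $1$. If the stack is nonempty when the input runs out, then by Lemma~\ref{lem:inductive} the algorithm's stack is nonempty too and the final test rejects with probability $1$. The interesting case is a terminal mismatch, where $w[j]$ differs from the terminal $b$ at the top of the block being consumed: the algorithm does not notice this at once, but the subtraction turns the coefficient $b$ at degree $h-1$ into $b-{\bf w[j]}\not\equiv 0\pmod p$ (nonzero since $p$ exceeds the fixed alphabet size). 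Within a single block distinct positions occupy distinct degrees, so this coefficient is never cancelled, and when the block is exhausted \comppart\ is the value at $\alpha$ of a \emph{nonzero} polynomial of degree at most $n$. The algorithm can then accept only if $\alpha$ is one of its at most $n$ roots, which by the Schwartz--Zippel Lemma happens with probability at most $n/p$.

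The step I expect to need the most care is this last one, namely verifying that a recorded mismatch cannot be masked by contributions from elsewhere in the compressed stack. In the single-nonterminal \DLIN\ setting this was automatic, since the whole stack formed one polynomial with one monomial per degree; here the terminal content is distributed over several stack items, so I must confirm that the degree carrying the mismatch is not shared by another live monomial. This follows from the per-item fingerprint structure of Lemma~\ref{lem:inductive} together with the fact that consecutive blocks are laid out at disjoint degree ranges within an item (the offset $h$ in \Compress\ is precisely the accumulated block length). Once this is established, taking $p$ to be a prime between $n^2$ and $2n^2$, as prescribed for the algorithm, guarantees $p>|\Sigma|$ and pushes the error below $1/n$.
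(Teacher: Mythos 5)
Your proof is correct and follows essentially the same route as the paper's: completeness from the stack-simulation invariant of Lemma~\ref{lem:inductive}, and soundness by the same three-case analysis of how the \CPDA\ rejects, with the terminal-mismatch case handled by observing that the uncancelled coefficient at degree $h-1$ makes \comppart\ an evaluation of a nonzero degree-$\le n$ polynomial, so Schwartz--Zippel bounds the acceptance probability by $n/p$. Your added remarks --- that the control flow is independent of $\alpha$ and that the mismatch degree cannot collide with any other live monomial in the same stack item (since its nonterminal field is $\epsilon$) --- are exactly the points the paper's proof uses, just spelled out more explicitly.
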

\begin{proof}
If $w\in L$ then by Lemma \ref{lem:inductive} we have that the algorithm always
accepts.  Suppose the \CPDA\ rejects at a certain step $t$, when symbol at the
$t'$th position of the input was accessed. Let the   There are three cases:
\begin{enumerate}
\item \CPDA\ had a non-terminal on the stack-top and it rejected as a 
	matching rule to be applied could not be found.
 \item \CPDA\ had a terminal on the stack-top, say $a$ and it rejected because
$a \neq w[t']$.
  \item the stack was not empty at the end of the string.
\end{enumerate}
In Case $1$, the algorithm rejects with probability $1$.  For Case $2$,
let the top most item in the stack of the algorithm at step $t$ be 
$(\comppart,\nonterm,h)$. Then the algorithm subtracts ${\bf w[j]}\alpha^{h-1}$ from the 
stack and decreases the height by $1$.  The only other monomial in $\comppart$
with degree $h-1$ is ${\bf a}\alpha^{h-1}$. Hence $\comppart$ is a random evaluation of
a nonzero polynomial of degree at most $n$. From Lemma \ref{lem:inductive}, 
$\nonterm = \epsilon$ and hence no other monomial of degree $h$ is added or
subtracted from $\comppart$. Now either the stack item $(\comppart,\nonterm,h)$
is never popped, or at the time of popping $\comppart$ is checked to be zero.
In the former case, the algorithm rejects with probability $1$ and in the latter
with probability at least $1-n/p$. In Case $3$ the algorithm rejects
with probability $1$.
% I am not sure whether we need a union bound here.
\end{proof}

Now Theorem \ref{thm:llk} follows from the above lemma by appropriately
selecting the value of $p$ to be a prime between $n^{c+1}$ and $2n^{c+1}$.
Such a prime can be obtained in time polynomial in $n$ by exhaustive search. 

\section{Lower bounds for membership testing}
\label{sec:lowerbounds}

In this section, we prove that the algorithms given in Section
\ref{sec:dlin} are optimal.

\begin{proof}[Proof of Theorem \ref{thm:dlin-lowerbound}]

  We reduce the two-party communication problem of testing equality
  ($\forall x,y \in \{0,1\}^n,~\EQUALITY(x,y) = 1 \leftrightarrow x=y$) of
  strings to membership testing for \otDyckt. In this communication problem,
  the first party, Alice, is given a string $x$ and the other party, Bob,
  is given the string $y$, and they need to communicate to determine
  if $x$ and $y$ are equal. 

  Suppose there is a $p$-pass streaming algorithm for \otDyckt\ using
  space $s$.  We will show that such an algorithm leads to protocol
  for the communication problem, where the total communication is
  $(2p-1)s$. First, Alice and Bob transform their inputs as follows.
  Let $x'$ be the string obtained from $x$ by replacing every $0$ by a
  $[$ and every $1$ by a $($; let $y'$ be the string obtained from $y$
  by first reversing it and then replacing $0,1$ by $],)$
  respectively.  Note that the string $z=x'y'\in
  \big\{(,[,],)\big\}^{2n}\in \otDyckt\ $ iff $x=y$. Alice and Bob
  will simulate the streaming algorithm on $z$ in the following
  natural way: Alice runs the streaming algorithm on $x'$ and on
  reaching the end of the her input, passes on the contents of the
  memory to Bob who continues the simulation on $y'$ and passes the
  contents of the memory back to Alice at the end. If there algorithm
  makes $p$ (left to right) passes, then during the simulation the
  contents of the memory change hands $2p-1$ times. If the algorithm
  is deterministic, the protocol is deterministic. If the algorithm is
  randomized, the protocol is randomized and has the same error
  probability.

  Since any deterministic protocol for $\EQUALITY(x,y)$ requires $n$
  bits of communication and any randomized protocol requires
  $\Omega(\log n)$ of communication for (error bounded by a constant
  strictly less than $\frac{1}{2}$) (see for example \cite{KN06}), both
  our claims follow immediately.
\end{proof}

We now establish our lower bound for {\DCFL}s.
\begin{proof}[Proof of Theorem~\ref{thm:vpl-lowerbound}]
Consider the language $L$ generated by the \CFG\ with rules 
\[ S \rightarrow [S]~\mid~ [S)~ \mid~ (S]~ \mid~ \epsilon.\] 
Note that $L$ is in {\DCFL}; in fact, it is a {\VPL}.

It is easy to verify that two strings $x, y \in \{0,1\}^n$ represent
characteristic vectors of disjoint subsets of $\{1,2,\ldots,n\}$ iff
the string $x' y' \in L$, where $x'$ is obtained from $x$ and $y'$
from $y$ exactly as in the proof of
Theorem~\ref{thm:vpl-lowerbound}. Thus, a $p$-pass space $s$ streaming
algorithm for membership testing in $L$ can be used to derive a
protocol for the set disjointness problem using communication
$(2p-1)s$.  Since the bounded error randomized communication
complexity of the set disjointness problem is $\Omega(n)$ (see
\cite{KN06}), our claim follows immediately.
\end{proof}

\section{Streaming algorithms for checking degree sequence of graphs}
\label{sec:deg-seq}
 %The prover can be thought
% of as an adversary who might give the edges in the worst possible
% order or to be co-operating with the verifier, in which case sends
% the edges in an order that helps the verifier to check the
% property. The former is termed adversarial model and the latter as
% the best order model by \cite{DSLN09}.  , then assuming that the
% input is presented in the best-order also makes sense.
% motivate why checking degree sequences are important%
In this section, we study the complexity of solving the problem \DEGSEQ\ defined
in Section \ref{sec:intro}. We present the proof of the first part of Theorem \ref{thm:deg-seq}.

\begin{proof}[Proof of part $1$ of Theorem \ref{thm:deg-seq}] We come up with
  a uni-variate polynomial from the given degree sequence and the set
  of edges such that the polynomial is identically zero if and only if
  the graph has the given degree sequence.

  We do not store the polynomial explicitly. Instead, we evaluate this
  polynomial at a random point chosen from a large enough field and
  only maintain the evaluation of the polynomial. The Schwartz-Zippel
  lemma \cite{MR96} gives us that with high probability the evaluation
  will be non-zero if the polynomial is non-zero. (If the polynomial
  is identically zero, its evaluation will also be zero.)

  Let the vertex set of the graph be $\{1,\ldots,n\}$. The uni-variate polynomial that we construct is:
$$q(x) = \sum_i d_i x^i- \sum_{i=1}^m x^{u_i} $$ 

The algorithm can be now described as:
%% insert algorithm environment.
\begin{algorithm}
  \begin{algorithmic}
    \STATE Pick $\alpha \in_{R} {\mathbb{F}_p}$($p$ will be fixed
    later). 
    \STATE $Sum \leftarrow 0$  
    \FOR{$i=1$ to $n$} 
    \STATE $Sum\leftarrow Sum + d_i\alpha^i$
    \ENDFOR
    \FOR{$i=1$ to $m$ (where $m$ number of edges)} 
    \STATE $Sum \leftarrow Sum - \alpha^{u_i}$
    \ENDFOR
    \IF{$Sum=0$} 
       \STATE accept 
    \ELSE 
       \STATE reject
    \ENDIF
    \caption{Randomized streaming algorithm for \DEGSEQ}
  \end{algorithmic}
\end{algorithm}

It is easy to note that the algorithm requires only log-space as long
as $p$ is $O(poly(n))$. The input is being read only once from left to
right. For the correctness, note that if the given degree sequence
corresponds to that of the given graph, then $q(x)$ is identically
zero and the value of $Sum$ is also zero for any randomly picked
$\alpha$. We know that $q(x)$ is non-zero when the given degree
sequence does not correspond to that of the given graph. However, the
evaluation may still be zero. Note that degree of $q(x)$ is $n$. If
the field size is chosen to be $n^{1+c} \leq p \leq n^{2+c}$ then due
to Schwartz-Zippel lemma \cite{MR96} the probability that $Sum$ is
zero given that $q(x)$ is non-zero is at most $n/p$ which is at most
$n^{-c}$.
\end{proof}

Now we give a $p$-pass, ${O}((n \log n)/p)$-space deterministic
algorithm for \DEGSEQ\ and hence prove part $2$ of Theorem
\ref{thm:deg-seq}.  The algorithm simply stores the degrees of $n/p$
vertices during a pass and checks whether those vertices have exactly
the degree sequence as stored. If the degree sequence is correct, then
proceed to the next set of $n/p$ vertices. The algorithm needs to
store ${O}((n \log n)/p)$ bits during any pass. The algorithm makes
$p$-passes. 

Finally we show that both the algorithms presented for \DEGSEQ\ is optimal up to 
a $\log n$ factor,
by proving Theorem \ref{thm:deg-seq-lowerbound}.
\begin{proof}[Proof of Theorem \ref{thm:deg-seq-lowerbound}]
  We reduce the two party communication problem of testing equality to
  that of \DEGSEQ. Given strings $x,y\in\{0,1\}^{n}$ we obtain a
  degree sequence $d=(d_{1},d_{2},\cdots,d_{n})$ and a list of edges
  $e_{1}e_{2}\cdots e_{m}$. Take $d_i=x_i$ and for each $i$ such that
  $y_{i}=1$, add an edge $(i,i)$. Clearly $\EQUALITY(x,y)=1$ if
  and only if $d$ is the degree sequence of the graph with edges
  $e_{1}e_{2}\cdots e_{m}$. Again, as in proof of Theorem \ref{thm:dlin-lowerbound}, the theorem follows because of the known communication complexity
  lower bounds for \EQUALITY .
\end{proof}

%\include{6}

%% The Appendices part is started with the command \appendix;
%% appendix sections are then done as normal sections
%% \appendix

%% \section{}
%% \label{}

%% References
%%
%% Following citation commands can be used in the body text:
%% Usage of \cite is as follows:
%%   \cite{key}          ==>>  [#]
%%   \cite[chap. 2]{key} ==>>  [#, chap. 2]
%%   \citet{key}         ==>>  Author [#]

%% References with bibTeX database:

\bibliographystyle{model1-num-names}
\bibliography{bibliography}

%% Authors are advised to submit their bibtex database files. They are
%% requested to list a bibtex style file in the manuscript if they do
%% not want to use model1-num-names.bst.

%% References without bibTeX database:

% \begin{thebibliography}{00}

%% \bibitem must have the following form:
%%   \bibitem{key}...
%%

% \bibitem{}

% \end{thebibliography}

\end{document}